\newcommand{\floor}[1]{\lfloor #1 \rfloor}
\newtheorem{coro}{Corollary}
\newtheorem{remark}{Remark}
\newtheorem{example}{Example}
\newtheorem{definition}{Definition}
\newtheorem{thm}{Theorem}
\newtheorem{prop}{Proposition}
\newtheorem{ass}{Assumption}
\newtheorem{lem}{Lemma}
\newtheorem{rem}{Remark}
\newcommand{\abs}[1]{\left\vert#1\right\vert}
\newcommand{\set}[1]{\left\{#1\right\}}
\newcommand{\Real}{\mathbb R}
\newcommand{\eps}{\varepsilon}
\renewcommand{\phi}{\varphi}
\renewcommand{\subset}{\subseteq}
\newcommand{\U}{\mathcal{U}}
\newcommand{\T}{\mathcal{T}}
\renewcommand{\S}{\mathcal S}
\newcommand{\ra}{\rightarrow}
\newcommand{\Z}{\mathbb{Z}}
\newcommand{\ltlx}{\text{LTL}}
\newcommand{\true}{\relax\ifmmode \mathit{True} \else \em True \/\fi}
\newcommand{\false}{\relax\ifmmode \mathit{False} \else \em False \/\fi}
\newcommand{\eventually}{\Diamond}
\newcommand{\always}{\Box}
\def\@copyrightspace{\relax}
\begin{document}

\title{Robust Abstractions for Control Synthesis: \\Robustness Equals Realizability for Linear-Time Properties}

%
%
%
%
%

\numberofauthors{1} 
%
\author{
%
%
\alignauthor
Jun Liu\\[1\jot]
      \affaddr{Department of Applied Mathematics}\\[1\jot]
       \affaddr{University of Waterloo}\\[1\jot]
       \email{j.liu@uwaterloo.ca}
}

\maketitle
\begin{abstract}
We define robust abstractions for synthesizing provably correct and robust controllers for (possibly infinite) uncertain  transition systems. It is shown that robust abstractions are sound in the sense that they preserve robust satisfaction of linear-time properties. We then focus on discrete-time control systems modelled by nonlinear difference equations with inputs and define concrete robust abstractions for them. While most abstraction techniques in the literature for nonlinear systems focus on constructing sound abstractions, we present computational procedures for constructing both sound and approximately complete robust abstractions for general nonlinear control systems without stability assumptions. Such procedures are approximately complete in the sense that, given a concrete discrete-time control system and an arbitrarily small perturbation of this system, there exists a finite transition system that robustly abstracts the concrete system and is abstracted by the slightly perturbed system simultaneously. A direct consequence of this result is that robust control synthesis for discrete-time nonlinear systems and linear-time specifications is robustly decidable. More specifically, if there exists a robust control strategy that realizes a given linear-time specification, we can algorithmically construct a (potentially less) robust control strategy that realizes the same specification. The theoretical results are illustrated with a simple motion planning example. 
\end{abstract}

\printccsdesc


\keywords{Nonlinear systems; control synthesis; abstraction; robustness; linear-time property; linear temporal logic; decidability}

\section{Introduction}

Abstraction serves as a bridge for connecting control theory and formal methods in the sense that hybrid control design for dynamical systems and high-level specifications can be done using finite abstractions of these systems \cite{alur2000discrete,tabuada2009verification}. There has been a rich literature on computing abstractions for linear and nonlinear dynamical systems in the past decade (see, e.g., \cite{girard2010approximately,liu2016finite,liu2014abstraction,pola2008approximately,reiszig2009computing,tazaki2012discrete,zamani2012symbolic}). Early work on abstraction focuses on constructing  symbolic models that are bisimilar (equivalent) to the original system. The seminal work in \cite{tabuada2006linear} shows that bisimilar symbolic models exist for controllable linear systems. As a result, existence of controllers for such systems to meet linear-time properties (such as those specified by linear temporal logic \cite{baier2008principles}) is decidable. For nonlinear systems that are incrementally stable \cite{angeli2002lyapunov}, it is shown in \cite{pola2008approximately} that approximately bisimilar models can be constructed (see also \cite{girard2010approximately}, for construction of approximately bisimilar models for switched systems, and \cite{girard2012controller} for its use in control synthesis). The work in \cite{zamani2012symbolic} considered symbolic models for nonlinear systems without stability assumptions, in which it is shown that symbolic models that approximately alternatingly simulate the sample-data representation of a general nonlinear control system can be constructed. The work in \cite{reiszig2009computing} and \cite{tazaki2012discrete} both proposes computational procedures for constructing finite abstractions of discrete-time nonlinear systems. The abstraction techniques in \cite{zamani2012symbolic,reiszig2009computing,tazaki2012discrete} are conservative and sound in the sense that they are useful in the design of provably correct controllers, but do not necessarily yield a feasible design because the computational procedures for constructing abstractions for potentially unstable nonlinear systems are not complete. 

Robustness is a central property to consider in control design, because all practical control systems need to be robust to imperfections in all aspects of control design and implementation, such as modelling, sensing, computation, communication, and actuation. For abstraction-based control design, how to preserve robustness poses a particular challenge because the hierarchical control design approach based on abstraction often use quantized state measurements (modelled as symbolic states in the abstraction) to compute appropriate control signals. Because of the state quantizers by definition are discontinuous, special attention is required to ensure that the resulting design is actually robust to measurement errors and disturbances. The work in \cite{liu2014abstraction} (see also \cite{liu2016finite}) proposes a novel notion of abstractions that are equipped with additional robustness margins to cope with different types of uncertainties in modelling, such as measurement errors, delays, and disturbances. The work in \cite{reissig2014feedback} (see also \cite{reissig2016feedback}) defines a new notation of system relations for abstraction-based control design. By explicitly considering the interconnection of state quantizers and feedback controllers, it is shown that the new system relation can also be used to design robust controllers against uncertainties and disturbances. The type of abstractions considered in \cite{liu2014abstraction,liu2016finite,reissig2014feedback,reissig2016feedback} resemble the approximate alternating simulations considered in \cite{zamani2012symbolic} for nonlinear systems. These abstractions, nonetheless, are all conservative and sound. To the best knowledge of the authors, how to compute complete abstractions (or approximately complete) abstractions for general nonlinear systems without stability assumptions remains an open problem. 

As an attempt to bridge this gap, in this paper, we define robust abstractions as a system relation from a (possibly infinite) transition system subject to uncertainty to anther transition system. We show that, while this abstraction relation is to some extent similar to the type of system relations considered in \cite{liu2016finite,reissig2016feedback,zamani2012symbolic}, it also has some subtle differences that are important for proving the approximate completeness results later in the paper. We show that robust abstractions are sound in the sense that they preserve robust satisfaction of linear-time properties. The main contributions of the paper include computational procedures for constructing both sound and approximately complete robust abstractions for general discrete-time nonlinear control systems without stability assumptions. We show that such procedures are complete in the sense that, given a concrete discrete-time control system and an arbitrarily small perturbation of this system, there exists a finite transition system that robustly abstracts the concrete system, whereas the perturbed system abstracts this finite transition system. An important consequence of this main result asserts that existence of robust controllers for discrete-time nonlinear systems and linear-time specifications is decidable. Finally, we would like to make clear upfront that the main point of this paper is not on providing more efficient algorithms for computing abstractions. Therefore, complexity issues, though important, are not a concern for the current paper and will be investigated in future work. 

The organization of the paper is very straightforward. Section \ref{sec:ts} presents some background material on transition systems and define robust abstractions. We highlight some similarities and subtle differences of the new abstraction relation with several variants of simulation relations in the literature. Section \ref{sec:main} presents the main results of the paper on construction of sound and approximately complete robust abstractions for discrete-time nonlinear control systems. A numerical example is used to illustrate the effectiveness of robust abstractions in Section \ref{sec:ex}. The paper is concluded in Section \ref{sec:conc}.  

\textbf{Notation:} Let $f$ be a (binary) relation from $A$ to $B$, i.e., $f$ is a subset of the Cartesian product $A\times B$. For each $a\in A$,  $f(a)$ denotes the set $\,\set{b:\,b\in B\text{ such that }(a,b)\in f}$; for each $b\in B$, $f^{-1}(b)$ denotes the set $\,\set{a:\,a\in A,\,(a,b)\in f}$; for $A'\subset A$, $f(A')=\cup_{a\in A'} f(a)$; and for $B'\subset B$, $f^{-1}(B)=\cup_{b\in B'}f^{-1}(B)$. Let $g$ be a relation from $A$ to $B$ and $f$ be a relation from $B$ to $C$. The composition of $f$ and $g$, denoted by $f\circ g$, is a relation from $A$ to $C$ defined by 
$$
f\circ g=\set{(a,c):\,\exists b\in B\text{ s.t. }(a,b)\in g\text{ and }(b,c)\in f}.
$$ 
For two sets $A,B\subset \Real^n$, 
$$A+B = \set{c:\,\exists a\in A,\exists b\in B\text{ s.t. }a+b=c}$$ and $A\backslash B=\set{a:\,a\in A,a\not\in B}$. 
For $a\in \Real^n$ and $B\subset \Real^n$, $a+B=\set{a}+B$. Let $\abs{\cdot}$ denote the infinity norm in $\Real^n$ and $\mathbb{B}$ denote the unit closed ball in infinity norm centred at the origin, i.e. $\mathbb{B}=\set{x\in\Real^n:\,\abs{x}\le 1}$. The dimension of $\mathbb{B}$ will be clear from the context.

\section{Transition systems and robust abstractions}\label{sec:ts}

\subsection{Transition systems}

\begin{definition}\em \label{def:ts}
  A \emph{transition system} is a tuple
  \begin{displaymath}
    \T=(Q,A,R,\Pi,L),
  \end{displaymath}
where
  \begin{itemize}
  \item $Q$ is the set of states; 
  \item $A$ is the set of actions;
  \item $R \subseteq Q \times A \times Q$ is the transition relation;
  \item $\Pi$ is the set of atomic propositions;
  \item $L:Q \to 2^{\Pi}$ is the labelling function.
  \end{itemize}
\end{definition}

Consider the transition system $\T$ above. For each action $a\in A$ and $q\in Q$, the $a$-successor of $q$, denoted by $\text{Post}_{\T}(q,a)$, is defined by 
$$
\text{Post}_{\T}(q,a)=\set{q':\,q'\in Q\text{ s.t. }(q,a,q')\in R}. 
$$
For each $q\in Q$, the set of admissible actions for $q$, denoted by $A_{\T}(q)$, is defined by
$$
A_{\T}(q)=\set{a:\,\text{Post}_{\T}(q,a)\neq \emptyset}. 
$$
In this paper, we assume that all transition systems have no terminal states in the sense that $A_{\T}(q)\neq \emptyset$ for all $q\in Q$. 

An \emph{execution} of $\T$ is an infinite alternating sequence of states and actions  
$$\rho=q_0a_0q_1a_1q_2a_2\cdots,$$
where $q_0$ is some initial state and $(q_i,a_i,q_{i+1})\in R$ for all $i\ge 0$. 
The \emph{path} resulting from the execution $\rho$ above is 
$$
\text{Path}(\rho)=q_0q_1q_2\cdots.
$$
The \emph{trace} of the execution $\rho$ is defined by 
$$\text{Trace}(\rho)=L(q_0)L(q_1)L(q_2)\cdots.$$
A \emph{control strategy} for a transition system $\T$ is a partial function $s:\,(q_0,q_1,\cdots,q_i)\mapsto a_i$ that maps the state history to the next action. An \emph{$s$-controlled execution} of a transition system $\T$ is an execution of $\T$, where for each $i\geq 0$, the action $a_i$ is chosen according to the control strategy $s$; $s$-controlled paths and traces are defined in a similar fashion.

\subsection{Uncertainty transition systems}

\begin{definition}\em \label{def:delta}
A transition relation $\Delta \subseteq Q \times A \times Q$ is called an \emph{uncertain transition relation} for $\T=(Q,A,R,\Pi,L)$, if the following two conditions hold:
\begin{itemize}
\item[(i)] $R\cap\Delta=\emptyset$; 
\item[(ii)] for each $(q,a,q')\in \Delta$, there exists some $(q,a,q'')\in R$. 
\end{itemize}
\end{definition}

\begin{definition}\em \label{def:uts}
An \emph{uncertain transition system} consisting of $\T=(Q,A,R,\Pi,L)$ as a nominal transition system and $\Delta$ as an uncertain transition relation for $\T$, denoted by $\T\oplus\Delta$, is defined by 
$$
\T\oplus\Delta=(Q,A,R\cup\Delta,\Pi,L). 
$$
\end{definition}

It is clear from the above definition that, while $\Delta$ introduces additional transitions for the transition system $\T$, it does not add more admissible actions for any state. In other words, for all $q\in Q$, $A_{\T}(q)=A_{\T\oplus\Delta}(q)$.  

Since an {uncertain transition system} is simply a transition system with additional transitions introduced by some uncertain transition relation, the execution (path, trace), control strategy, and controlled execution (path, trace) for an {uncertain transition system} are defined in the same way as for a nominal transition system.

\subsection{Robust abstractions}

We first define a notion of abstraction between transition systems for control synthesis. 

\begin{definition}\label{def:ra}\em 
For two transition systems 
$$\T_1=(Q_1,A_1,R_1,\Pi,L_1)$$
and 
$$\T_2=(Q_2,A_2,R_2,\Pi,L_2),$$
a relation $\alpha\subset Q_1\times Q_2$ is said to be an \emph{abstraction} from $\T_1$ to $\T_2$, if the following conditions are satisfied: 
\begin{itemize}
\item[(i)]  for all $q_1\in Q_1$, there exists $q_2\in Q_2$ such that $(q_1,q_2)\in \alpha$ (i.e., $\alpha(q_1)\neq \emptyset$); 
\item[(ii)] for all $(q_1,q_2)\in \alpha$ and $a_2\in A_{\T_2}(q_2)$, there exists $a_1\in A_{\T_1}(q_1)$   
such that 
\begin{equation}\label{eq:over}
\alpha(\text{Post}_{\T_1}(q,a_1))\subset \text{Post}_{\T_2}(q_2,a_2);
\end{equation} 
for all $q\in \alpha^{-1}(q_2)$; 
\item[(iii)] for all $(q_1,q_2)\in \alpha$, $L_2(q_2)\subset L_1(q_1)$. 
\end{itemize}
If such a relation $\alpha$ exists, we say that $\T_2$ \emph{abstracts} $\T_1$ and write $\T_1\preceq_{\alpha} \T_2$ or simply $\T_1\preceq \T_2$. 
\end{definition}

We then define robust abstractions as abstractions of uncertain transition systems. 

\begin{definition}\label{def:rra}\em 
Let $\Delta$ be an {uncertain transition relation} for $\T_1$. If there exists an abstraction $\alpha$ from $\T_1\oplus\Delta$ to $\T_2$, i.e., $\T_1\oplus\Delta\preceq_{\alpha} \T_2$, we say that $\alpha$ is a $\Delta$-robust abstraction from $\T_1$ to $\T_2$ and $\T_2$ \emph{$\Delta$-robustly abstracts} $\T_1$. With a slight abuse of terminology, we sometimes also say that $\T_2$ is a $\Delta$-robust abstraction of $\T_1$.   
\end{definition}

\begin{remark}\em 
We highlight several differences between the notation of abstraction proposed in Definition \ref{def:ra} and other similar system relations in the literature. Apart from the obvious distinction that, in Definition \ref{def:ra}, an explicit model of the uncertainty is considered (following \cite{topcu2012synthesizing}), the abstraction defined by Definition \ref{def:ra} differs from several variants of simulation relations in the literature as elaborated below:\\

\emph{Finite abstractions with robustness margins:} This notion of abstractions introduced in \cite{liu2014abstraction,liu2016finite} is defined by introducing two positive parameters $(\gamma_1, \gamma_2)$, which define the extra transitions to be added to the  abstractions to ensure robustness. Suppose there is a metric $d$ defined on $Q_1$. Then finite abstractions with robustness margins $(\gamma_1, \gamma_2)$ amount to defining 
\begin{align*}
\Delta &= \{(q,a,q'):\,\exists (q_1,a,q_1')\in R_1\text{ s.t. }\\
&\qquad\qquad\qquad\qquad  d(q_1,q)\le \gamma_1,\,d(q_1',q')\le \gamma_2\}\backslash R_1. 
\end{align*}
To establish $\T_1\oplus\Delta\preceq_{\alpha} \T_2$, condition (\ref{eq:over}), which can be equivalently written as
\begin{align*}
\bigcup_{q\in \alpha^{-1}(q_2)}\alpha(\text{Post}_{\T_1\oplus\Delta}(q,a_1)) &=\alpha(\bigcup_{q\in \alpha^{-1}(q_2)}\text{Post}_{\T_1\oplus\Delta}(q,a_1)) \\
&\subset \text{Post}_{\T_2}(q_2,a_2)
\end{align*}
is essentially the over-approximation (of transitions) condition in \cite{liu2014abstraction,liu2016finite}. The main difference lies in that Definition \ref{def:ra} does not assume that a metric is defined on $Q_1$ and the uncertainty model is not restricted to that defined by level sets of the distance function. Furthermore, here we define the abstraction relation on a general Kripke structure, whereas the work in \cite{liu2014abstraction,liu2016finite} defines concrete abstractions from ordinary differential/difference equations with inputs to finite transition systems. 

\emph{Feedback refinement relations \cite{reissig2014feedback,reissig2016feedback}:} Similar to \cite{liu2014abstraction,liu2016finite}, the abstraction relation considered in \cite{reissig2014feedback,reissig2016feedback} also requires that, for each $(q_1,q_2)\in \alpha$, the admissible actions for each $q_2$ is a subset of the admissible actions for $q_1$. In Definition \ref{def:ra}, for each $(q_1,q_2)\in \alpha$, it is not required that $A_{\T_2}(q_2)\subset A_{\T_1}(q_1)$, i.e., the admissible actions for $q_1$ do not have to be a subset of the admissible actions for $q_2$. This difference enables us to formulate and prove the approximate completeness results later in this paper (Section \ref{sec:complete}). Note that, when $A_{\T_2}(q_2)\subset A_{\T_1}(q_1)$, condition (\ref{eq:over}) can be simplified to: for each $(q_1,q_2)\in\alpha$ and every $a\in A_{\T_2}(q_2)$, 
\begin{equation}\label{eq:frr}
\alpha(\text{Post}_{\T_1}(q_1,a))\subset \text{Post}_{\T_2}(q_2,a).
\end{equation} 
In other words, the same action $a$ used by $q_2$ is assumed to be available (and used) for all $q_1\in\alpha^{-1}(q_2)$, because $A_{\T_2}(q_2)\subset A_{\T_1}(q_1)$.

\emph{Alternating simulations \cite{pola2008approximately,zamani2012symbolic}:} The notion of alternating simulations \cite{pola2008approximately,zamani2012symbolic} 
stipulates that, for each $(q_1,q_2)\in\alpha$ and every $a_2\in A_{\T_2}(q_2)$, there exists $a_1\in A_{\T_2}(q_1)$ such that, for every $q_1'\in\text{Post}_{\T_1}(q_1,a_1)$, there exists {some state} $q_2'\in \text{Post}_{\T_2}(q_2,a_2)$ such that $(q_1',q_2')\in\alpha$. In other words, for each $(q_1,q_2)\in\alpha$ and every $a_2\in A_{\T_2}(q_2)$, there exists $a_1\in A_{\T_1}(q_1)$ such that 
\begin{equation}\label{eq:as}
\alpha(q_1')\cap \text{Post}_{\T_2}(q_2,a_2) \neq \emptyset,
\end{equation} 
for all $q_1'\in \text{Post}_{\T_1}(q_1,a_1)$, as articulated in \cite{reissig2014feedback,reissig2016feedback}. Clearly, (\ref{eq:as}) is a weaker condition than (\ref{eq:over}) or (\ref{eq:frr}), unless $\alpha$ is single-valued. Furthermore, and more importantly, (\ref{eq:as}) does not stipulate the use of the same action $a_1$ for all $q\in\alpha^{-1}(q_2)$, i.e., $a_1$ may depend on $q$ (concrete states corresponding to $q_2$). A consequence of the latter is that, to implement the controller, one needs knowledge of the concrete state rather than the abstract (symbolic) state alone. 
\end{remark}

We use a simple example to illustrate the differences discussed above.  

\begin{example}
Consider three transition systems 
$$\T_i=(Q_i,A_i,R_i,\Pi,L_i),\quad i=1,2,3,$$ 
where $Q_1=\set{x_0,x_1,x_2}$, $Q_2=Q_3=\set{q_0,q_1}$, $A_1=\set{a,b}$, $A_2=A_3=\set{1,2,3}$, $\Pi=\set{\text{Initial}, \text{Goal}}$, $L_1(x_0)=L_1(x_1)=L_2(q_0)=L_3(q_0)=\set{\text{Initial}}$, and $L_1(x_2)=L_2(q_1)=L_3(q_1)=\set{\text{Goal}}$. The transition relations are shown in Figure \ref{fig:fts1}. 
\begin{figure}[ht!]
    \centering
    \includegraphics[width=0.47\textwidth]{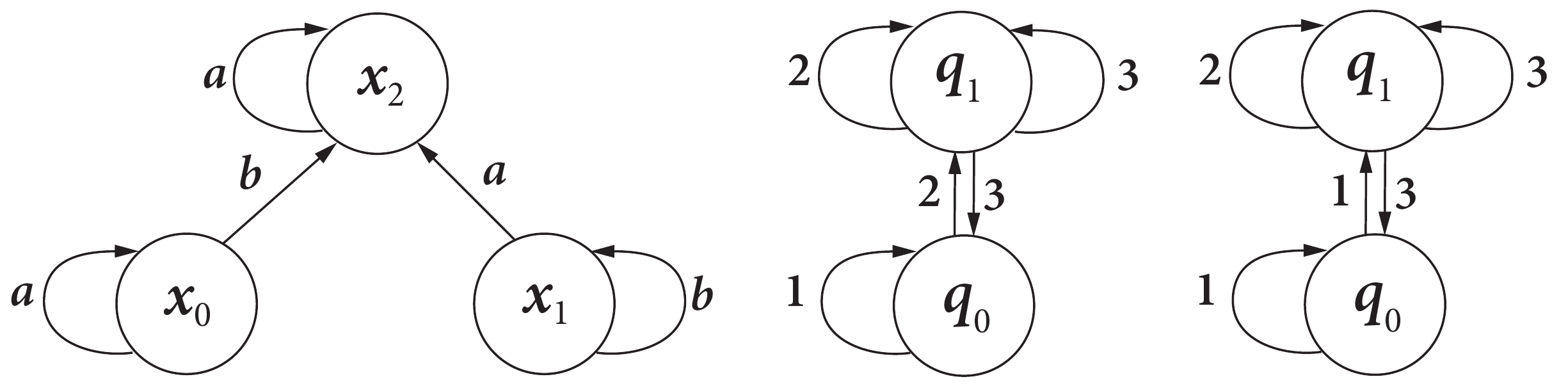}
    \caption{Transition systems $\T_1$ (left), $\T_2$ (middle), and $\T_3$ (right).}
    \label{fig:fts1}
\end{figure}

Define an abstraction relation from $\T_1$ to $\T_2$ by 
\begin{equation*}
\alpha=\set{(x_0,q_0),(x_1,q_0),(x_2,q_1)}. 
\end{equation*}
Then it can be easily verified that (\ref{eq:as}) is satisfied and $\alpha$ is an alternating simulation from $\T_1$ to $\T_2$. In fact, we can check that, for $(x_0,q_0)\in\alpha$, and action $1\in A_2$, there exists $a\in A_1$ such that 
$$\alpha(\text{Post}_{\T_1}(x_0,a))=\alpha(\set{x_0})=\set{q_0}=\text{Post}_{\T_2}(q_0,1),$$ 
which implies (\ref{eq:as}). Similarly, for $(x_0,q_0)\in\alpha$, and action $2\in A_2$, there exists $b\in A_1$ such that 
$$\alpha(\text{Post}_{\T_1}(x_0,b))=\alpha(\set{x_2})=\set{q_1}=\text{Post}_{\T_2}(q_0,2),$$ 
which also implies  (\ref{eq:as}). 
For $(x_2,q_1)\in\alpha$, and action $3\in A_2$, there exists $a\in A_1$ such that 
$$\alpha(\text{Post}_{\T_1}(x_2,a))=\alpha(\set{x_2})=\set{q_1}\subset \set{q_0,q_1}=\text{Post}_{\T_2}(q_1,3),$$ 
which implies   (\ref{eq:as}). The rest can be checked in a similar fashion. 

Suppose that one needs to design a control strategy for $\T_1$ such that all controlled executions of $\T_1$ starting from the 'Initial' set will eventually reach the 'Goal' set. Then, while one can find such a control strategy for $\T_2$, to implement this strategy on $\T_1$, however, $\T_1$ needs to be able to discriminate $x_0$ and $x_1$ and choose the appropriate actions ($b$ for $x_0$ and $a$ for $x_1$). This is not the case if only symbolic state information from the abstraction is available.

Note that, according to Definition \ref{def:ra}, we do not have $\T_1\preceq_{\alpha}\T_2$ because, for $(x_0,q_0)\in\alpha$ and action $1\in A_2$, we have
\begin{align*}
\bigcup_{x\in\alpha^{-1}(q_0)}\alpha(\text{Post}_{\T_1}(x,a))&=\alpha(\set{x_0},\set{x_2})\\
&=\set{q_0,q_1}\not\subset\set{q_0}=\text{Post}_{\T_2}(q_0,1),
\end{align*} 
\begin{align*}
\bigcup_{x\in\alpha^{-1}(q_0)}\alpha(\text{Post}_{\T_1}(x,b))&=\alpha(\set{x_1},\set{x_2})\\
&=\set{q_0,q_1}\not\subset\set{q_0}=\text{Post}_{\T_2}(q_0,1),
\end{align*} 

Thus, (\ref{eq:over}) does not hold for either action $a$ or $b$.

We can check that  $\T_1\preceq_{\alpha}\T_3$. Because the set of actions in $\T_2$ (and $\T_3$) is not a subset of the actions of $\T_1$ (in fact there are more actions in $\T_2$ and $\T_3$ than $\T_1$), $\alpha$ does not provide an abstraction relation from $\T_1$ to $\T_2$ or from $\T_1$ to $\T_3$ in the strict sense of the notions of simulation relations considered in \cite{liu2014abstraction,liu2016finite,reissig2014feedback,reissig2016feedback}. 

To consider a robust abstraction for $\T_1$, let $\Delta = \set{(x_2,a,x_1)}$. Then it can be verified that the transition system $\T_3$ is also a $\Delta$-robust abstraction of $\T_1$. 
\end{example}

We will state some immediate results that follow from Definition \ref{def:ra}. 

\begin{prop}\label{prop:well}
Let $\T$ be a transition system and $\Delta$ be an uncertain transition relation for $\T$. Then $\T\preceq\T\oplus\Delta$. 
\end{prop}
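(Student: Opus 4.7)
The plan is to exhibit the identity relation $\alpha=\set{(q,q):\,q\in Q}\subset Q\times Q$ as the desired abstraction and verify the three conditions of Definition \ref{def:ra} directly. This choice is the only natural candidate, since both systems share the same state set $Q$, the same action set $A$, the same proposition set $\Pi$, and the same labelling $L$; they differ only in their transition relations, with $R\cup\Delta\supset R$.

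First I would check condition (i), which is immediate since $\alpha(q)\ni q$ for every $q\in Q$. For condition (iii), the containment $L(q)\subset L(q)$ is trivial because $L_1=L_2=L$ and $\alpha$ pairs each state only with itself. The only nontrivial step is condition (ii): given $(q,q)\in\alpha$ and an action $a_2\in A_{\T\oplus\Delta}(q)$, I need to produce $a_1\in A_{\T}(q)$ such that $\alpha(\text{Post}_{\T}(q',a_1))\subset \text{Post}_{\T\oplus\Delta}(q,a_2)$ for all $q'\in\alpha^{-1}(q)$.

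Here I would use two key facts. First, $\alpha^{-1}(q)=\set{q}$ because $\alpha$ is the identity, so the universal quantification over $q'$ collapses. Second, the remark following Definition \ref{def:uts} states that $A_{\T}(q)=A_{\T\oplus\Delta}(q)$, so I can simply take $a_1=a_2$. It then remains to observe that $\text{Post}_{\T}(q,a_2)\subset \text{Post}_{\T\oplus\Delta}(q,a_2)$, which follows because $R\subset R\cup\Delta$, and that $\alpha$ acts as the identity on this set. This closes condition (ii).

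There is essentially no obstacle to overcome: the proposition formalizes the intuition that adding uncertain transitions can only make the system ``larger'' while preserving admissibility of actions, and the identity relation witnesses this directly. The only point that requires a moment of care is ensuring that the admissible action set is preserved (not enlarged) by $\Delta$, which is exactly what condition (ii) of Definition \ref{def:delta} guarantees via the remark in the text.
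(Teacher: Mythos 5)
Your proof is correct and follows the same route as the paper: the paper's proof simply states that the identity relation on $Q$ works (citing Definitions \ref{def:delta} and \ref{def:ra} as ``straightforward to check''), and your write-up supplies exactly the verification it leaves implicit, including the two key facts $A_{\T}(q)=A_{\T\oplus\Delta}(q)$ and $R\subset R\cup\Delta$.
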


\begin{proof}
Let  $\T=(Q,A,R,\Pi,L)$. It is straightforward to check by Definitions \ref{def:delta} and \ref{def:ra} that the identity relation from $Q$ to $Q$  defines a $\Delta$-robust abstraction from $\T$ to $\T\oplus\Delta$. 
\end{proof}

Setting $\Delta=\emptyset$, a special case of Proposition \ref{prop:well} asserts that $\T\preceq \T$ for any transition system. It is also straightforward to verify that abstraction relations are transitive in the following sense. 

\begin{prop}\label{prop:trans}
Let $\T_i$ ($i=1,2,3$) be transition systems and $\Delta$ be an uncertain transition relation for $\T_1$. If 
$\T_1\,\preceq_{\alpha_1} \T_2$ and $\T_2\,\preceq_{\alpha_2}\T_3$, then $\T_1\,\preceq_{\alpha_2\circ\alpha_1} \T_3$. 
\end{prop}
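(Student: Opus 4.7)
The plan is to verify directly that the composite relation $\alpha := \alpha_2 \circ \alpha_1 \subset Q_1 \times Q_3$ satisfies the three conditions of Definition~\ref{def:ra}. Note that the hypothesis on the uncertain relation $\Delta$ does not enter the conclusion, so it will not be invoked; this is a pure transitivity check for $\preceq$.

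Conditions (i) and (iii) are routine chainings. For (i), given $q_1 \in Q_1$, use (i) for $\alpha_1$ to pick $q_2 \in \alpha_1(q_1)$ and then (i) for $\alpha_2$ to pick $q_3 \in \alpha_2(q_2)$, placing $q_3$ in $\alpha(q_1)$. For (iii), unfold $(q_1, q_3) \in \alpha$ to obtain a witness $q_2$ with $(q_1,q_2)\in\alpha_1$ and $(q_2,q_3)\in\alpha_2$; the two label inclusions then chain to $L_3(q_3) \subset L_2(q_2) \subset L_1(q_1)$.

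The substantive step is (ii). Given $(q_1, q_3) \in \alpha$ and $a_3 \in A_{\T_3}(q_3)$, the idea is to pick an intermediate $q_2 \in \alpha_1(q_1) \cap \alpha_2^{-1}(q_3)$ (which exists by definition of the composition) and then lift the action backwards through the two levels. Applying (ii) for $\alpha_2$ at $(q_2,q_3)$ with $a_3$ produces $a_2 \in A_{\T_2}(q_2)$ such that $\alpha_2(\text{Post}_{\T_2}(q_2',a_2)) \subset \text{Post}_{\T_3}(q_3,a_3)$ for every $q_2' \in \alpha_2^{-1}(q_3)$. Applying (ii) for $\alpha_1$ at $(q_1,q_2)$ with this $a_2$ produces $a_1 \in A_{\T_1}(q_1)$ such that $\alpha_1(\text{Post}_{\T_1}(q,a_1)) \subset \text{Post}_{\T_2}(q_2,a_2)$ for every $q \in \alpha_1^{-1}(q_2)$. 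Composing the two inclusions then yields $\alpha(\text{Post}_{\T_1}(q,a_1)) = \alpha_2(\alpha_1(\text{Post}_{\T_1}(q,a_1))) \subset \alpha_2(\text{Post}_{\T_2}(q_2,a_2)) \subset \text{Post}_{\T_3}(q_3,a_3)$ for every such $q$.

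The step I expect to require the most care is extending this inclusion from $q \in \alpha_1^{-1}(q_2)$ to every $q \in \alpha^{-1}(q_3) = \alpha_1^{-1}(\alpha_2^{-1}(q_3))$, as demanded by (ii). The remaining preimages are those $q$ whose witnessing $\T_2$-state is some $q_2'' \in \alpha_2^{-1}(q_3) \setminus \{q_2\}$, so the chain from $\alpha_1$'s condition only directly covers the fibre above $q_2$. The plan there is to exploit the uniformity of $a_2$ over the whole fibre $\alpha_2^{-1}(q_3)$: for any such $q$ at which $a_1$ happens to be admissible, one argues $\alpha_1(\text{Post}_{\T_1}(q,a_1)) \subset \bigcup_{q_2' \in \alpha_2^{-1}(q_3)} \text{Post}_{\T_2}(q_2',a_2)$, whose $\alpha_2$-image lands in $\text{Post}_{\T_3}(q_3,a_3)$ by the first step; for $q$ at which $a_1$ is inadmissible, $\text{Post}_{\T_1}(q,a_1)$ is empty and the inclusion holds vacuously.
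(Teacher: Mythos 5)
Your overall plan coincides with the paper's own proof: conditions (i) and (iii) are handled by the same routine chainings, and the two-stage lifting of $a_3$ to $a_2$ to $a_1$ in condition (ii) is exactly the paper's argument. You have also correctly isolated where the difficulty sits: the $a_1$ obtained from condition (ii) of $\alpha_1$ at $(q_1,q_2)$ comes with the inclusion $\alpha_1(\text{Post}_{\T_1}(q,a_1))\subset\text{Post}_{\T_2}(q_2,a_2)$ only for $q\in\alpha_1^{-1}(q_2)$, whereas the composite must satisfy its inclusion for every $q\in(\alpha_2\circ\alpha_1)^{-1}(q_3)$. However, the patch you propose does not close this gap. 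The inclusion you assert for $q$ lying over some $q_2''\in\alpha_2^{-1}(q_3)$ with $q_2''\ne q_2$, namely $\alpha_1(\text{Post}_{\T_1}(q,a_1))\subset\bigcup_{q_2'\in\alpha_2^{-1}(q_3)}\text{Post}_{\T_2}(q_2',a_2)$, follows from nothing you have: Definition~\ref{def:ra}(ii) applied at $(q,q_2'')$ with action $a_2$ (and only when $a_2\in A_{\T_2}(q_2'')$) produces \emph{some} implementing action $\tilde a_1\in A_{\T_1}(q)$, in general different from $a_1$; it says nothing about where $a_1$ itself sends $q$. Since $a_1$ may well be admissible at $q$, the vacuity escape does not cover these cases either. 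The ``uniformity of $a_2$'' you invoke is uniformity of $\alpha_2$'s guarantee over the fibre $\alpha_2^{-1}(q_3)$; it induces no uniformity of the $\T_1$-level implementations across the distinct $\alpha_1$-fibres sitting inside $(\alpha_2\circ\alpha_1)^{-1}(q_3)$.

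This is a genuine failure, not a missing justification. Consider: $\T_3$ with states $c_0,c_1,c_2$, one action $\gamma$, transitions $(c_0,\gamma,c_1)$, $(c_1,\gamma,c_1)$, $(c_2,\gamma,c_2)$; $\T_2$ with states $b,b',d_1,d_2$, actions $\beta,\beta'$, transitions $(b,\beta,d_1)$, $(b',\beta,d_1)$, $(b',\beta',d_2)$, $(d_1,\beta,d_1)$, $(d_2,\beta,d_2)$; $\T_1$ with states $x,y,z_1,z_2$, actions $u,v$, transitions $(x,u,z_1)$, $(y,u,z_2)$, $(y,v,z_1)$, $(z_1,u,z_1)$, $(z_2,u,z_2)$; all labels empty; $\alpha_1=\set{(x,b),(y,b'),(z_1,d_1),(z_2,d_2)}$ and $\alpha_2=\set{(b,c_0),(b',c_0),(d_1,c_1),(d_2,c_2)}$. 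Both relations satisfy Definition~\ref{def:ra}: $\alpha_1$ has singleton fibres, with $\beta$ implemented by $u$ at $x$ but by $v$ at $y$, and $\beta'$ implemented by $u$ at $y$; for $\alpha_2$, both states of the fibre $\alpha_2^{-1}(c_0)=\set{b,b'}$ implement $\gamma$ by the common action $\beta$. Yet the composite fails condition (ii) at $(x,c_0)$ with action $\gamma$: the only action admissible at $x$ is $u$, and $y\in(\alpha_2\circ\alpha_1)^{-1}(c_0)$ gives $\alpha_2\circ\alpha_1(\text{Post}_{\T_1}(y,u))=\set{c_2}\not\subset\set{c_1}=\text{Post}_{\T_3}(c_0,\gamma)$; correspondingly, your asserted inclusion fails there, since $\alpha_1(\text{Post}_{\T_1}(y,u))=\set{d_2}$ while $\bigcup_{q_2'\in\alpha_2^{-1}(c_0)}\text{Post}_{\T_2}(q_2',\beta)=\set{d_1}$. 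You should be aware that the paper's own proof makes the same unjustified leap: its displayed chain passes from a union over $(\alpha_2\circ\alpha_1)^{-1}(q_3)$ to a union over the $\alpha_2$-fibre using information that condition (ii) of $\alpha_1$ supplies only over $\alpha_1^{-1}(q_2)$. So your attempt has not missed an idea that the paper supplies; rather, with condition (ii) read fibre-wide as written, the proposition needs an extra hypothesis — for instance that $\alpha_2^{-1}(q_3)$ is always a singleton, or a feedback-refinement-style condition as in (\ref{eq:frr}) where the \emph{same} action is used at both levels, which makes the implementation automatically uniform across $\alpha_1^{-1}(\alpha_2^{-1}(q_3))$ and restores the composition argument.
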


\begin{proof}
Let $\alpha_3=\alpha_2\circ \alpha_1$. We verify that conditions (i)--(iii) of Definition \ref{def:ra} are satisfied:
\begin{itemize}
\item[(i)] For all $q_1\in Q$, $\alpha_3(q_1)$ is non-empty, because $\alpha_1(q_1)$ is non-empty and $\alpha_2(q_2)$ is non-empty for any $q_2\in Q_2$.\\
\item[(ii)] {\color{black} For any $(q_1,q_3)\in \alpha_3$}, there exists $q_2\in Q_2$ such that $(q_1,q_2)\in \alpha_1$ and $(q_2,q_3)\in \alpha_2$. For any $q_3\in A_{\T_3}(q_3)$, there exists $a_2\in A_{\T_2}(q_2)$ such that 
$$
\alpha_2(\text{Post}_{\T_2}(q,a_2))\subset \text{Post}_{\T_3}(q_3,a_3),
$$
for all $q\in \alpha_2^{-1}(q_3)$.  For $a_2\in A_{\T_2}(q_2)$, there exists $a_1\in A_{\T_1}(q_1)$ such that 
$$
\alpha_1(\text{Post}_{\T_1}(q,a_1))\subset \text{Post}_{\T_2}(q_2,a_2),
$$
for all $q\in \alpha_1^{-1}(q_2)$. It follows that 
\begin{align*}
&\bigcup_{q\in \alpha_3^{-1}(q_3)}\alpha_3(\text{Post}_{\T_1}(q,a_1))\\
&\qquad=\bigcup_{q\in \alpha_3^{-1}(q_3)}\alpha_2\circ \alpha_1(\text{Post}_{\T_1}(q,a_1))\\
&\qquad\subset \bigcup_{q\in \alpha_2^{-1}(q_2)}\alpha_2(\text{Post}_{\T_2}(q,a_2))\\
&\qquad\subset \text{Post}_{\T_3}(q_3,a_3). 
\end{align*}
\item[(iii)] For any $(q_1,q_3)\in \alpha_3$, there exists $q_2\in Q_2$ such that $(q_1,q_2)\in \alpha_1$ and $(q_2,q_3)\in \alpha_2$. Hence
$$
L_3(q_3)\subset L_2(q_2)\subset L_1(q_1). 
$$
\end{itemize}
\end{proof}

\subsection{Soundness of abstractions}

In this section, we prove that abstractions given by Definition \ref{def:ra} are sound in the sense of preserving realizability of linear-time properties. 

A \emph{linear-time (LT) property} \cite{baier2008principles} over a set of atomic propositions $\Pi$ is a subset of $(2^\Pi)^\omega$, which is the set of all infinite words over the alphabet $2^\Pi$, defined by
$$
(2^\Pi)^\omega= \set{A_0A_1A_2\cdots:\,A_i\in 2^\Pi,\quad i\ge 0}. 
$$
A particular class of LT properties can be conveniently specified by \emph{linear temporal logic} (LTL \cite{pnueli1977temporal}). This logic consists of propositional logic operators (e.g., \textbf{true}, \textbf{false}, {\em negation\/} ($\neg$), {\em disjunction\/} ($\vee$), {\em conjunction\/} ($\wedge$) and
{\em implication\/} ($\rightarrow$)), and temporal operators (e.g., {\em next\/} ($\bigcirc$), {\em always\/} ($\always$),
{\em eventually\/} ($\eventually$), {\em until\/} ($\mathcal{U}$) and {\em weak until\/} ($\mathcal{W}$)).

The syntax of LTL over a set of atomic propositions $\Pi$ is defined inductively follows:
\begin{itemize}
\item \textbf{true} and \textbf{false} are $\ltlx$ formulae; 
\item an atomic proposition $\pi\in\Pi$ is an $\ltlx$ formula; 
\item if $\phi$ and $\psi$ are $\ltlx$ formulas, then $\neg\phi$, $\phi \vee \phi$, $\bigcirc\phi$, and $\phi \U \phi$ are $\ltlx$ formulas.
\end{itemize}

The semantics of $\ltlx$ is defined on infinite words over the alphabet $2^\Pi$. Given a sequence $\sigma=A_0A_1A_2\cdots$ in $2^{\Pi}$, we define $\sigma,i\vDash \phi$, meaning that $\sigma$ satisfies an $\ltlx$ formula $\phi$ at position $i$, inductively as follows:
\begin{itemize}
\item $\sigma,i\vDash \textbf{true}$; 
\item $\sigma,i\vDash\pi$ if and only if $\pi\in A_i$;
\item $\sigma,i\vDash\neg\phi$ if and only if $\sigma,i\nvDash\phi$;
\item $\sigma,i\vDash\phi_1\vee \phi_2$ if and only if $\sigma,i\vDash\phi_1$ or $\sigma,i\vDash\phi_2$;
\item $\sigma,i\vDash\bigcirc \phi$ if and only if $\sigma,i+1\vDash \phi$;
\item $\sigma,i\vDash\phi_1\U\phi_2$ if and only if there exists $j\ge i$ such that $\sigma,j\vDash\phi_2$ and $\sigma,k\vDash\phi_1$ for all $i\le k<j$;
\end{itemize}
We write $\sigma\vDash\phi$, and say $\sigma$ satisfies $\phi$, if $\sigma,0\vDash\phi$. An execution $\rho$ of a transition system $\T$ is said to satisfy an LTL formula $\phi$, written as $\rho\vDash\phi$, if and only if its trace $\text{Trace}(\rho)\vDash\phi$. Given a control strategy $s$ for $\T$, if all $s$-controlled executions of $\T$ satisfy $\phi$, we write $(\T,s)\vDash \phi$. If such a control strategy $s$ exists, we also say that $\phi$ is \emph{realizable} for $\T$. 

\begin{rem}\em
For technical reasons, we assume that all $\ltlx$ formulas have been transformed into positive normal form \cite[Chapter 5]{baier2008principles}, where all negations appear only in front of the atomic propositions and only the following operators are allowed $\wedge$, $\vee$, $\bigcirc$, $\mathcal{U}$, and $\mathcal{W}$ (defined by $\phi \mathcal{W}\psi=(\phi \U\psi )\vee \always\phi$. We further assume that all negations of atomic propositions are replaced by new atomic propositions.  
\end{rem}

\begin{definition}\em \label{def:implement}
Given an abstraction relation $\alpha$ from $\T_1$ to $\T_2$ and a control strategy $\mu_i$ for $\T_i$ ($i=1,2$), 
$\mu_1$ is called \emph{$\alpha$-implementation} of $\mu_2$, if, for each $n\ge 0$,
$$u_n=\mu_1(x_0,x_1,x_2,\cdots,x_n)$$ is chosen according to 
$$a_n=\mu_2(q_0,q_1,q_2,\cdots,q_n)$$ 
in such a way (as guaranteed by Definition \ref{def:ra} for $\T_1\preceq_{\alpha} \T_2$) that 
$$
\alpha(\text{Post}_{\T_1}(x,u_n))\subset \text{Post}_{\T_2}(q_n,a_n)
$$
for all $x\in \alpha^{-1}(q_n)$, where $q_n\in \alpha(x_n)$.  
\end{definition}

We end this section by stating a soundness result for abstractions.

\begin{thm}\label{thm:sound}
Suppose that $\alpha$ is an abstraction from $\T_1$ to $\T_2$, i.e., $\T_1\,\preceq_{\alpha}\T_2$ and 
and let $\phi$ be an LTL formula. If there exists a control strategy $\mu_2$ for $\T_2$ such that $(\T_2,\mu_2)\vDash\phi$, then there exists a control strategy $\mu_1$, which is an $\alpha$-implementation of $\mu_2$, for $\T_1$ such that $(\T_1,\mu_1)\vDash\phi$. 
\end{thm}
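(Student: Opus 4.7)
The plan is to construct $\mu_1$ by simulating, on the fly, an abstract $\mu_2$-controlled execution of $\T_2$ that tracks a given concrete execution of $\T_1$, and then to invoke a standard monotonicity argument for LTL in positive normal form.

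First, I would construct $\mu_1$ inductively together with an associated abstract history. By condition (i) of Definition \ref{def:ra}, for the initial state $x_0$, fix some $q_0 \in \alpha(x_0)$. Inductively, suppose a concrete history $x_0,\dots,x_n$ and an abstract history $q_0,\dots,q_n$ with $q_i \in \alpha(x_i)$ for all $i\le n$ have been produced. Set $a_n = \mu_2(q_0,\dots,q_n)$; by condition (ii), there exists $u_n \in A_{\T_1}(x_n)$ such that
\[
\alpha(\text{Post}_{\T_1}(x,u_n)) \subset \text{Post}_{\T_2}(q_n, a_n)
\]
for every $x \in \alpha^{-1}(q_n)$, and in particular for $x=x_n$. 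Define $\mu_1(x_0,\dots,x_n)=u_n$. For any concrete successor $x_{n+1}\in\text{Post}_{\T_1}(x_n,u_n)$, the inclusion above gives $\alpha(x_{n+1})\subset \text{Post}_{\T_2}(q_n,a_n)$, so one may select $q_{n+1}\in\alpha(x_{n+1})\cap\text{Post}_{\T_2}(q_n,a_n)$, extending both histories. By construction $\mu_1$ is an $\alpha$-implementation of $\mu_2$ in the sense of Definition \ref{def:implement}.

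Second, any $\mu_1$-controlled execution $\rho_1=x_0u_0x_1u_1\cdots$ of $\T_1$ gives rise, via the above choices, to a $\mu_2$-controlled execution $\rho_2=q_0a_0q_1a_1\cdots$ of $\T_2$ with $(x_i,q_i)\in\alpha$ for all $i\ge 0$. Condition (iii) of Definition \ref{def:ra} then yields $L_2(q_i)\subset L_1(x_i)$, so the traces satisfy $\text{Trace}(\rho_2)[i]\subset\text{Trace}(\rho_1)[i]$ pointwise. Finally, I would invoke the standard monotonicity lemma for LTL in positive normal form, proved by induction on the structure of $\phi$: if $\sigma,\sigma'\in(2^\Pi)^\omega$ satisfy $\sigma[i]\subset\sigma'[i]$ for all $i\ge 0$ and $\sigma\vDash\phi$, then $\sigma'\vDash\phi$. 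Applied to $\sigma=\text{Trace}(\rho_2)$ and $\sigma'=\text{Trace}(\rho_1)$ with $\sigma\vDash\phi$ (which holds since $(\T_2,\mu_2)\vDash\phi$), this gives $\text{Trace}(\rho_1)\vDash\phi$. Since $\rho_1$ was an arbitrary $\mu_1$-controlled execution of $\T_1$, we conclude $(\T_1,\mu_1)\vDash\phi$.

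The main subtlety to handle carefully is the inductive construction of $\mu_1$: the nondeterminism of $\T_1$ means all possible successors $x_{n+1}$ must be compatible with the chosen abstract action $a_n$. This is precisely what the uniform quantifier ``for all $x\in\alpha^{-1}(q_n)$'' in condition (ii) delivers, so that the same $u_n$ works regardless of which concrete successor materializes and regardless of which concrete $x$ corresponds to $q_n$. This uniformity is exactly the feature distinguishing Definition \ref{def:ra} from the alternating-simulation notion discussed in the preceding remark, and it is what allows $\mu_1$ to be a well-defined function of the observable concrete history alone.
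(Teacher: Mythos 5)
Your proof is correct and follows essentially the same route as the paper: lift each $\mu_1$-controlled execution of $\T_1$ to a $\mu_2$-controlled execution of $\T_2$ via condition (ii), use the label inclusion $L_2(q_i)\subset L_1(x_i)$ from condition (iii), and conclude by monotonicity of positive-normal-form LTL. You are merely more explicit than the paper on two points it leaves implicit---the inductive construction of $\mu_1$ (the paper presupposes it via Definition \ref{def:implement}) and the monotonicity lemma for traces ordered pointwise by inclusion---which is a welcome tightening rather than a different argument.
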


\begin{proof}
Let 
$$\T_1=(Q_1,A_1,R_1,\Pi,L_1)$$ 
and 
$$\T_2=(Q_2,A_2,R_2,\Pi,L_2).$$ 
We show that, by Definitions \ref{def:ra} and \ref{def:implement}, a $\mu_1$-controlled path of $\T_1$ always leads to a $\mu_2$-controlled path of $\T_2$. Suppose we start with $x_k\in Q_1$ and let $q_k$ be arbitrarily chosen from $\alpha(x_k)$, where $k\ge 0$. Suppose $a_k=\mu_2(q_0,q_1,q_2,\cdots,q_k)$ and $u_k=\mu_1(x_0,x_1,x_2,\cdots,x_k)$. Since 
$
\alpha(\text{Post}_{\T_1}(x_k,u_k))\subset \text{Post}_{\T_2}(q_k,a_k),
$
we know that for any $q_{k+1}\in \alpha(x_{k+1})$ and $x_{k+1}\in \text{Post}_{\T_1}(x_k,u_k)$, we have $q_{k+1}\in  \text{Post}_{\T_2}(q_k,a_k)$. This implies that $(q_k,a_k,q_{k+1})$ is a valid transition in $\T_2$ and therefore, by induction, $q_0q_1q_2\cdots$ is a $\mu_2$-controlled path of $\T_2$, if $x_0x_1x_2\cdots$ is a $\mu_1$-controlled path of $\T_1$. Furthermore, by Definitions \ref{def:ra}, we have $L_2(q_k)\subset L_1(x_k)$ for all $k\ge 0.$ Since the trace of $q_0q_1q_2\cdots$ satisfies $\phi$, we know that the trace of $x_0,x_1,x_2\cdots$ also satisfies $\phi$. 
\end{proof}

Based on the proof, it is clear that an abstraction relation preserves not only temporal logic specifications but also linear-time properties in general, because we essentially proved that the controlled traces of $\T_1$ are included in the controlled traces of $\T_2$ (in fact, trace inclusion is equivalent to preservation of LT properties \cite[Theorem 3.15]{baier2008principles}). 

\section{Robust Decidability of Discrete-time Control Synthesis}\label{sec:main}

In this section, we investigate robust abstractions of discrete-time nonlinear systems modelled by nonlinear difference equations with inputs. We establish computational procedures for constructing sound and approximately complete robust abstractions for this class of control systems under very mild conditions. 

\subsection{Perturbed discrete-time control systems as uncertain transition systems}

A \emph{discrete-time control system} is modelled by a difference equation of the form 
\begin{equation}\label{eq:dts}
x(t+1)=f(x(t),u(t)),
\end{equation}
where $x(t)\in X\subset\Real^n$, $u(t)\in U\subset\Real^m$, and $f:\,\Real^n\times\Real^m\ra \Real^n$. 

A \emph{solution} to (\ref{eq:dts}) is an alternating sequence of states and control inputs of the form 
$$
x(0)u(0)x(1)u(1)x(2)u(2)\cdots,
$$
such that (\ref{eq:dts}) is satisfied. 

A \emph{control strategy} for  (\ref{eq:dts}) is a partial function 
$$\sigma:\,(x(0),\cdots,x(t))\mapsto u(t)$$ 
for all $t=0,1,2,\cdots$, which maps the state history up to time $t$ to the control input $u(t)$ at time $t$.

\begin{definition}\em \label{def:dtsts}
The discrete-time control system (\ref{eq:dts}) can be written as a transition system of the form 
\begin{equation}\label{ts}
\S=(Q_\S, A_\S, R_\S, \Pi, L_\S)
\end{equation}
by defining 
\begin{itemize}
\item $Q_\S=X\cup \set{X^c}$; 
\item $A_\S=U$;
\item $(x,u,x')\in R_\S$ if and only if one of the following holds: (i) $x'=f(x,u)$ and $x,x'\in X$; (ii) $x'=X^c$ and $f(x,u)\not\in X$; (iii) $x'=x=X^c$; 
\item $\Pi$ is a set of atomic propositions on $Q_\S$ and $\textbf{in}\in \Pi$;
\item $L_\S:\,Q_\S\ra 2^{\Pi}$ is a labelling function satisfying $\textbf{in}\in L_\S(q)$ for $q\neq X^c$ and $\textbf{in}\not\in L_\S(X^c)$.  
\end{itemize}
\end{definition}
The state $X^c$ and label $\textbf{in}$ are introduced to precisely encode if an out-of-domain transition takes place.

We now introduce an uncertainty model for system (\ref{eq:dts}). 

\begin{definition}\em \label{uc}
Consider system (\ref{eq:dts}) subject to uncertainties of the form 
\begin{equation}\label{eq:dtsw}
x(t+1)=f(x(t),u(t))+w(t),
\end{equation}
where $w(t)\in \delta \mathbb{B}$ for some $\delta\ge 0$. Define $\Delta_\delta$ to consist of transitions $(x,u,x')\not\in R_\S$ such that one of the following holds:  (i) $x'\in f(x,u)+\delta \mathbb{B}$ and $x,x'\in X$; (ii) $x'=X^c$ and $f(x,u)+w\not\in X$ for some $w\in \delta \mathbb{B}$. 
\end{definition}

Clearly, $\S\oplus\Delta_{\delta}$ defined together by Definitions \ref{def:dtsts} and \ref{uc} exactly models (\ref{eq:dtsw}) as summarized in the following proposition. 

\begin{prop}
Each solution of (\ref{eq:dtsw}) that stays in $X$ is an execution of $\S\oplus\Delta_{\delta}$. Conversely, each execution of  
$\S\oplus\Delta_{\delta}$ that stays in $X$ is also a solution of (\ref{eq:dtsw}). 
\end{prop}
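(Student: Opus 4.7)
The plan is a straightforward two-direction unpacking of Definitions~\ref{def:dtsts} and \ref{uc}; there is no real obstacle beyond careful case analysis, since the proposition simply asserts that the encoding faithfully represents the perturbed dynamics on the domain $X$.

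For the forward direction I would take an arbitrary solution $x(0)u(0)x(1)u(1)\cdots$ of (\ref{eq:dtsw}) with $x(t)\in X$ for all $t\ge 0$. By (\ref{eq:dtsw}), at each step $x(t+1)=f(x(t),u(t))+w(t)$ with $w(t)\in\delta\mathbb{B}$, so $x(t+1)\in f(x(t),u(t))+\delta\mathbb{B}$. If $w(t)=0$, then $x(t+1)=f(x(t),u(t))$ and, since $x(t),x(t+1)\in X$, clause (i) of the definition of $R_\S$ gives $(x(t),u(t),x(t+1))\in R_\S$. Otherwise $(x(t),u(t),x(t+1))\notin R_\S$ (clauses (ii) and (iii) are ruled out because $x(t+1)\neq X^c$), but then clause (i) of Definition~\ref{uc} places the transition in $\Delta_\delta$. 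Thus every transition belongs to $R_\S\cup\Delta_\delta$, and the alternating sequence is an execution of $\S\oplus\Delta_\delta$.

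For the converse I would take an execution $x(0)u(0)x(1)u(1)\cdots$ of $\S\oplus\Delta_\delta$ with $x(t)\in X$ for all $t$. Fix $t\ge 0$ and consider the transition $(x(t),u(t),x(t+1))\in R_\S\cup\Delta_\delta$. Since $x(t+1)\in X$, we have $x(t+1)\neq X^c$, which rules out clauses (ii) and (iii) in the definition of $R_\S$ and clause (ii) in the definition of $\Delta_\delta$. Hence either $(x(t),u(t),x(t+1))\in R_\S$ by clause (i), giving $x(t+1)=f(x(t),u(t))$, or $(x(t),u(t),x(t+1))\in\Delta_\delta$ by its clause (i), giving $x(t+1)\in f(x(t),u(t))+\delta\mathbb{B}$. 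In the first case set $w(t)=0$; in the second set $w(t)=x(t+1)-f(x(t),u(t))\in\delta\mathbb{B}$. In either case $x(t+1)=f(x(t),u(t))+w(t)$ with $w(t)\in\delta\mathbb{B}$, so the sequence is a solution of (\ref{eq:dtsw}).

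The only subtlety worth flagging is the role of the auxiliary state $X^c$: it is precisely the hypothesis that the trajectory/execution stays in $X$ that lets us discard every clause in Definitions~\ref{def:dtsts} and \ref{uc} involving $X^c$ and obtain the clean correspondence above. No metric, regularity, or measurability assumption on $f$ is needed.
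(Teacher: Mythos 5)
Your proof is correct and follows essentially the same route as the paper's: both directions are handled by directly unpacking Definitions~\ref{def:dtsts} and \ref{uc} to translate between the disturbance term $w(t)\in\delta\mathbb{B}$ and membership of each transition in $R_\S\cup\Delta_\delta$. The only difference is that you spell out the case analysis (the $w(t)=0$ versus $w(t)\neq 0$ split, and the elimination of the $X^c$ clauses) that the paper's terser proof leaves implicit, which is a harmless refinement rather than a different approach.
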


\begin{proof}
This is straightforward to verify. Denote  
$$\rho=x(0)u(0)x(1)u(1)x(2)u(2)\cdots.$$ 
If $\rho$ is a solution of (\ref{eq:dtsw}) such that $x(t)\in X$ for all $t\ge 0$. Then there exists $w(0)w(1)\cdots$ such that 
$
x(t+1)= f(x(t),u(t))+w(t),
$
where $w(t)\in\delta \mathbb{B}$ for all $t\ge 0$, which implies that $(x(t),u(t),x(t+1))\in R_{\S}\cup \Delta_{\delta}$. Thus $\rho$ is also an execution of $\S\oplus\Delta_{\delta}$. Now suppose that $\rho$ is an execution  of $\S\oplus\Delta_{\delta}$ such that $x(t)\in X$ for all $t\ge0$. Then $
x(t+1)= f(x(t),u(t))+w(t),
$
where $w(t)\in\delta \mathbb{B}$ for all $t\ge 0$. This shows that $\rho$ is a solution of (\ref{eq:dtsw}). 
\end{proof}

Because of this proposition, in the sequel, when proving soundness results, we always assume that out-of-domain solutions and paths are taken care of by enforcing the solutions and paths to stay in the domain through a safety specification, i.e., by including 
$\always (\textbf{in})$ in the specification. 

\subsection{Soundness of robust abstractions for discrete-time control systems}

\begin{coro}\label{thm:sound2}
Suppose there exists a transition system  $\T$ such that $\S\oplus\Delta_{\delta}\preceq_{\alpha} \T$, where $\S$ and $\Delta_{\delta}$ are defined by Definitions \ref{def:dtsts} and \ref{uc}. Let $\phi$ be an LTL formula over $\Pi$. If there exists a control strategy $\mu$ for $\T$ such that $(\T,\mu)\vDash\phi$, then there exists a control strategy $\kappa$, which is an $\alpha$-implementation of $\mu$, for $\S\oplus\Delta_{\delta}$ such that $(\S\oplus\Delta_{\delta},\kappa)\vDash\phi$. 
\end{coro}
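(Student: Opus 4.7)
The statement is a direct specialization of Theorem \ref{thm:sound}, so my plan is essentially to reduce it to that theorem rather than redo the inductive argument from scratch. The key observation is that, by Definition \ref{def:uts}, $\S\oplus\Delta_{\delta}$ is itself a transition system in the sense of Definition \ref{def:ts}; it is only its \emph{origin} (as a nominal system perturbed by an uncertain transition relation) that distinguishes it. All the notions needed in Theorem \ref{thm:sound} (executions, traces, controlled executions, control strategies, satisfaction of LTL formulas) are defined identically for $\S\oplus\Delta_{\delta}$ as for any other transition system, as emphasized right after Definition \ref{def:uts}.

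With that in mind, I would first observe that the hypothesis $\S\oplus\Delta_{\delta}\preceq_{\alpha}\T$ fits verbatim into the hypothesis of Theorem \ref{thm:sound} upon identifying $\T_1:=\S\oplus\Delta_{\delta}$ and $\T_2:=\T$. Next, since $\mu$ is a control strategy for $\T$ with $(\T,\mu)\vDash\phi$, Theorem \ref{thm:sound} immediately yields a control strategy $\kappa$ for $\S\oplus\Delta_{\delta}$ that is an $\alpha$-implementation of $\mu$ (in the sense of Definition \ref{def:implement}) and satisfies $(\S\oplus\Delta_{\delta},\kappa)\vDash\phi$. This exhausts the content of the corollary.

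Because the proof is this short, the only thing worth flagging explicitly in the write-up is the well-posedness of the $\alpha$-implementation in the presence of uncertain transitions. Concretely, I would note that the action $u_n$ chosen by $\kappa$ at step $n$ depends on the history of concrete states in $Q_{\S}=X\cup\{X^c\}$, and is picked, via the second clause of Definition \ref{def:ra} applied to $\S\oplus\Delta_{\delta}\preceq_{\alpha}\T$, so that
\[
\alpha\bigl(\mathrm{Post}_{\S\oplus\Delta_{\delta}}(x,u_n)\bigr)\subset \mathrm{Post}_{\T}(q_n,a_n)
\]
holds uniformly for every $x\in\alpha^{-1}(q_n)$; this uniformity is precisely what allows the implementation to be carried out with only symbolic state feedback $q_n\in\alpha(x_n)$, and it is exactly the property used inductively in the proof of Theorem \ref{thm:sound} to transport the controlled trace of $\S\oplus\Delta_{\delta}$ into a controlled trace of $\T$.

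There is no real obstacle here: the corollary is an instance of an already established result, and the only mildly subtle point is the bookkeeping reminder that out-of-domain behaviour is handled by the $\mathbf{in}$ proposition of Definition \ref{def:dtsts}, so LTL satisfaction on $\S\oplus\Delta_{\delta}$ together with an invariance clause $\Box(\mathbf{in})$ in $\phi$ guarantees that the constructed $\kappa$-controlled solutions of the perturbed system (\ref{eq:dtsw}) stay in $X$ and hence correspond bijectively, via the preceding proposition, to executions of $\S\oplus\Delta_{\delta}$.
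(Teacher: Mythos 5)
Your proposal is correct and takes essentially the same approach as the paper, which proves this corollary in one line by noting that it follows directly from Theorem \ref{thm:sound}, exactly because $\S\oplus\Delta_{\delta}$ is itself a transition system so the theorem applies verbatim with $\T_1:=\S\oplus\Delta_{\delta}$ and $\T_2:=\T$. Your added remarks on the uniformity of the action choice over $\alpha^{-1}(q_n)$ and on the handling of out-of-domain behaviour via $\always(\textbf{in})$ are consistent with the paper's conventions and do not change the substance of the reduction.
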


\begin{proof}
It follows directly from Theorem \ref{thm:sound}. 
\end{proof}

It is interesting to note that $(\S\oplus\Delta_{\delta},\kappa)\vDash\phi$ implies that solutions of (\ref{eq:dts}) robustly satisfy $\phi$ in terms of not only additive disturbances modelled by (\ref{eq:dtsw}), but also other types of uncertainties such as measurement errors. To illustrate this, consider a scenario where the controller $\kappa$ is implemented on a system with measurement errors. We assume that this error is bounded, i.e., for each $x(t)\in \Real^n$, its measurement is given by
\begin{equation}\label{eq:error}
\hat{x}(t)= x(t) + e(t), 
\end{equation}
where $e(t)\in \eps \mathbb{B}$ for some $\eps>0$. To make the control strategy $\kappa$ for (\ref{eq:dts}) robust to measurement errors like (\ref{eq:error}), we can simply strengthen the labeling function $L$ of $\S$ as follows. A labelling function $\hat{L}:\,\Real^n\ra 2^\Pi$ is said to be the \emph{$\eps$-strengthening} of another labelling function $L:\,\Real^n\ra 2^\Pi$, if $\pi\in \hat{L}(x)$ if and only if $\pi\in L(y)$ for all $y\in x+\eps\mathbb{B}$.

The remaining technical results of the paper rely on the following assumption. 

\begin{ass}\label{ass}
The function $f:\,\Real^n\times\Real^m$ is locally Lipschitz continuous in both arguments. The sets $X$ and $U$ are compact. 
\end{ass}

The above assumption on $f$ is very mild and is satisfied as long as the function $f:\,\Real^n\times\Real^m$ is differentiable with respect to both variables. 

\begin{prop}
Let $\hat{\S}=(Q, A, R, \Pi, \hat{L})$, which is obtained from $\S$ in Definition \ref{def:dtsts} by replacing $L$ with its $\eps$-strengthening $\hat{L}$. Suppose that the assumptions of Corollary \ref{thm:sound2} hold with $\hat{\S}$ in place of $\S$. Then $(\S,\kappa)\vDash\phi$, subject to measurement errors described in (\ref{eq:error}), provided that $(L+1)\eps\le \delta$, where $L$ is the uniform Lipschitz constant for both variables of $f$ on the compact set $(X+\eps\mathbb{B})\times U$. 
\end{prop}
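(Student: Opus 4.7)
The plan is to reduce the claim to an application of Corollary~\ref{thm:sound2} for the strengthened system $\hat{\S}$. When $\kappa$ is executed with noisy measurements, it picks $u(t)=\kappa(\hat{x}(0),\ldots,\hat{x}(t))$ while the true trajectory evolves as $x(t+1)=f(x(t),u(t))$. The guiding idea is that, although the controller never sees the true state, the \emph{measurement} trajectory $\hat{x}(0),\hat{x}(1),\ldots$ behaves exactly like a trajectory of the disturbed system (\ref{eq:dtsw}) and is therefore a $\kappa$-controlled execution of $\hat{\S}\oplus\Delta_\delta$.

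To justify this, I would first establish the one-step estimate $\abs{\hat{x}(t+1)-f(\hat{x}(t),u(t))}\le L\abs{x(t)-\hat{x}(t)}+\abs{e(t+1)}\le(L+1)\eps\le\delta$, where the first inequality uses $\hat{x}(t+1)=f(x(t),u(t))+e(t+1)$ and the triangle inequality, and the Lipschitz step is valid because the safety convention keeps $x(t)\in X$, so both $(x(t),u(t))$ and $(\hat{x}(t),u(t))$ lie in $(X+\eps\mathbb{B})\times U$. Definition~\ref{uc} then makes $(\hat{x}(t),u(t),\hat{x}(t+1))$ a valid transition of $\hat{\S}\oplus\Delta_\delta$, so the measurement sequence is a $\kappa$-controlled execution of this uncertain transition system. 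Corollary~\ref{thm:sound2} (applied with $\hat{\S}$ in place of $\S$) then yields $\hat{L}(\hat{x}(0))\hat{L}(\hat{x}(1))\cdots\vDash\phi$.

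The last step is to transfer satisfaction from the measurement trace to the true trace $L(x(0))L(x(1))\cdots$. Since $\abs{x(t)-\hat{x}(t)}=\abs{e(t)}\le\eps$, the definition of $\eps$-strengthening immediately gives $\hat{L}(\hat{x}(t))\subset L(x(t))$ for every $t$. I would then invoke monotonicity of $\ltlx$ in positive normal form: whenever $\sigma,\sigma'\in(2^\Pi)^\omega$ satisfy $\sigma_i\subset\sigma'_i$ for all $i$, then $\sigma\vDash\phi$ implies $\sigma'\vDash\phi$. This is a routine structural induction over $\wedge$, $\vee$, $\bigcirc$, $\mathcal{U}$, and $\mathcal{W}$, enabled precisely by the paper's convention of replacing each $\neg\pi$ by a fresh atomic proposition so that atoms appear only positively. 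Applying this monotonicity to the two traces gives $(\S,\kappa)\vDash\phi$.

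The only real subtlety is the domain bookkeeping for the Lipschitz step---one must ensure that $\hat{x}(t)$ actually lies in $X+\eps\mathbb{B}$ so that the uniform constant $L$ from the hypothesis genuinely applies to the pair $(\hat{x}(t),u(t))$. Apart from that, the argument is a direct composition of the Lipschitz bound, soundness of the strengthened abstraction via Corollary~\ref{thm:sound2}, and monotonicity of positive $\ltlx$; no new invariants or inductive constructions are required.
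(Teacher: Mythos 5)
Your proof is correct, and it hinges on exactly the same key computation as the paper's: the one-step Lipschitz estimate $\abs{\hat{x}(t+1)-f(\hat{x}(t),u(t))}\le L\abs{x(t)-\hat{x}(t)}+\abs{e(t+1)}\le (L+1)\eps\le\delta$, which absorbs the measurement error into the $\delta$-disturbance. Where you differ is the decomposition. The paper does not invoke Corollary \ref{thm:sound2} at all; it re-runs the soundness induction of Theorem \ref{thm:sound} inline, constructing the abstract path $q_0q_1q_2\cdots$ with $q_k\in\alpha(\hat{x}(k))$ step by step, verifying that $(q_k,a_k,q_{k+1})$ is a transition of $\T$ from the Lipschitz bound together with the $\alpha$-implementation property, while simultaneously tracking the label inclusions $L_1(q_k)\subset\hat{L}(\hat{x}(k))\subset L(x(k))$; a single monotonicity step at the end then carries satisfaction from the abstract trace to the true trace. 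You instead interpose the measurement sequence as a standalone object: it is itself a $\kappa$-controlled execution of $\hat{\S}\oplus\Delta_{\delta}$, so Corollary \ref{thm:sound2} (used as a black box) yields $\hat{L}(\hat{x}(0))\hat{L}(\hat{x}(1))\cdots\vDash\phi$, and the $\eps$-strengthening plus monotonicity of positive-normal-form $\ltlx$ transfers this to $L(x(0))L(x(1))\cdots$. Your factorization is more modular, avoids duplicating an induction the paper has already done once, and makes explicit the monotonicity lemma that the paper uses only tacitly (both in Theorem \ref{thm:sound} and here); the paper's inline version, in exchange, exhibits the three-way correspondence between true states, measurements, and abstract states directly. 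One shared caveat: for your step (and equally for the paper's, which applies $\alpha$ and $\text{Post}_{\hat{\S}\oplus\Delta_\delta}$ to $\hat{x}(k)$) one tacitly needs each measurement $\hat{x}(t)$ to be a state of $\hat{\S}$, i.e., to lie in $X\cup\set{X^c}$, whereas a priori it only lies in $X+\eps\mathbb{B}$; your domain-bookkeeping remark covers the Lipschitz constant but not this point. Since the imprecision is inherited from the paper's own argument, it is not a gap specific to your proof.
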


\begin{proof}
We have  $\hat{\S}\oplus\Delta_{\delta}\preceq_{\alpha} \T$. The goal is to show that, despite the measurement errors, $\kappa$-controlled traces of $\S$ are a subset of the $\kappa$-controlled traces of $(\hat{\S},\Delta)$ and therefore satisfies $\phi$. Starting from $x(0)$, let $\hat{x}(0)$ be the measurement taken for $x(0)$. Suppose that an action $u(0)=\kappa(\hat{x}(0))=\mu(q_0)$ is chosen by $\kappa$, where $q_0\in \alpha(\hat{x}(0))$. Let $L_1$ be the labelling function for $\T$. Then $L_1(q_0)\subset\hat{L}(\hat{x}(0))$ by the definition of the robust abstraction. Since $\hat{L}$ is the {$\eps$-strengthening} of $L$ and $x(0)\in \hat{x(0)}+\eps\mathbb{B}$, it follows that $L_1(q_0)\subset \hat{L}(\hat{x}(0))\subset L(x(0))$.  

We suppose by induction that $L_1(q_k) \subset L(x(k))$ holds for some $k\ge 0$, where $q_k\in \alpha(\hat{x}(k))$ and $\hat{x}(k)\in x(k)+\eps\mathbb{B}$. The action at time $k$ is given by $u(k)=\kappa(\hat{x}(0),\cdots,\hat{x}(k))$, which implements $a_k=\mu(q_0,\cdots,q_k)$ in the sense of Definition \ref{def:implement}. The next state under $u(k)$ is given by $x(k+1)=f(x(k),u(k))$, whose measurement is $\hat{x}(k+1)=x(k+1)+e(k+1)\in x(k+1)+\eps\mathbb{B}$. Hence $L_1(q_{k+1})\subset \hat{L}(\hat{x}(k+1))$ implies $L_1(q_{k+1})\subset  \hat{L}(\hat{x}(k+1)) L(x(k+1))$. Thus, $L(q_k) \subset L(x(k))$ for all $k\ge 0$. 

We show that $(q_k,a_k,q_{k+1})$ is a valid transition in $\T$. Note that 
\begin{align*}
&\hat{x}(k+1)\\
&= x(k+1)+e(k+1) \\
&= f(x(k),u(k)) + e(k+1) \\
&= f(\hat{x}(k),u(k)) + (f(x(k),u(k)) - f(\hat{x}(k),u(k))) + e(k+1). 
\end{align*}
Since $f$ is $L$-Lipschitz continuous in both arguments on the compact set $(X+\eps\mathbb{B})\times U$, the above equation shows that 
$$
\hat{x}(k+1) \in f(\hat{x}(k),u(k)) + (L + 1)\eps  \mathbb{B} \subset f(\hat{x}(k),u(k)) + \delta \mathbb{B},
$$
because $(L + 1)\eps\le\delta$. Hence, by the choice of $u(k)$ by $\kappa$ (which is an $\alpha$-implementation of $\mu$), we have
\begin{align*}
q_{k+1}&\in \alpha(\hat{x}(k+1)) \\
&\subset \alpha(f(\hat{x}(k),u(k)) + \delta \mathbb{B})\\
&\subset \alpha(\text{Post}_{\hat{\S}\oplus\Delta}(\hat{x}(k),u(k)))\\
&\subset \text{Post}_{\T}(q_k,a_k),
\end{align*} 
where $\hat{x}(k)\in \alpha^{-1}(q_k)$, which shows that $(q_k,a_k,q_{k+1})$ is a valid transition in $\T$ and therefore $q_0q_1q_2\cdots$ is a valid path for $\T$. Since the trace of this path satisfies $\phi$ and $L_1(q_k)\subset L(x(k))$ for all $k\ge 0$, it follows that the trace of $x(0)x(1)x(2)\cdots$ also satisfies $\phi$.  
\end{proof}

\begin{rem}\em \label{rem:measure}
The soundness result above states that to cope with measurement errors, we only need to choose $\delta$ sufficiently large such that $(L+1)\eps\le \delta$ and strengthen the labelling function by a factor of $\eps$. This condition  simplifies the two robustness margins $(\gamma_1,\gamma_2)$ considered in the work \cite{liu2014abstraction,liu2016finite} and also does not require that the abstraction relation to be non-deterministic in order to be robust with respect to measurement errors as stated in \cite[Section VI.6]{reissig2016feedback}. 
\end{rem}

\subsection{Approximate completeness of robust abstractions for discrete-time control systems}\label{sec:complete}

In this section, we show that, under Assumption \ref{ass}, computing robust abstractions for the discrete-time control system (\ref{eq:dts}) is approximately complete, in the sense that, for arbitrary numbers $0\le \delta_1<\delta_2$, we can find a finite transition system $\T$ such that $\S\oplus\Delta_{\delta_1}\preceq \T\preceq \S\oplus\Delta_{\delta_2}$, where $\S$ and $\Delta_{\delta_i}$ ($i=1,2$) are defined in Definitions \ref{def:dtsts} and \ref{uc}. This result is made precise by the following theorem, which we present as the main result of the paper.

\begin{thm}\label{thm:complete}
For any numbers $0\le \delta_1<\delta_2$, let $\Delta_{\delta_i}$ ($i=1,2$) be given by Definition \ref{uc} with $\delta=\delta_i$. For any numbers $0\le \eps_1<\eps_2$, let $L_{\S_i}$ ($i=1,2$) be the $\eps_i$-strengthening of $L_\S$. Let 
$$
\S_{i} = (Q_\S,A_\S,R_\S\cup \Delta_{\delta_i},\Pi,L_{\S_i}),\quad i=1,2. 
$$
Then there exists a finite transition system $\T$ such that 
\begin{equation}\label{eq:main}
\S_{1}\preceq \T \preceq \S_2. 
\end{equation}
\end{thm}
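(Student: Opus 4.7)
The plan is to construct the abstraction $\T$ by partitioning the compact state set $X$ into cells of sufficiently small diameter and replacing the compact input set $U$ by a finite $\mu$-net $U_\mu\subset U$. By Assumption~\ref{ass}, $f$ admits a uniform Lipschitz constant $L$ on $X\times U$, so I first fix parameters $\eta,\mu>0$ satisfying
$$(1+L)\eta + L\mu \le \delta_2-\delta_1 \qquad\text{and}\qquad \eta\le \eps_2-\eps_1,$$
and partition $X$ into cells $X_1,\ldots,X_N$ each of diameter at most $\eta$. The finite transition system is $\T=(Q_\T,A_\T,R_\T,\Pi,L_\T)$ with $Q_\T=\set{X_1,\ldots,X_N,X^c}$, $A_\T=U_\mu$, and $(X_i,u_k,X_j)\in R_\T$ exactly when there exist $x\in X_i$ and $w\in\delta_1\mathbb{B}$ with $f(x,u_k)+w\in X_j$ (transitions to $X^c$ are included analogously when the perturbed image escapes $X$). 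Labels are assigned by $L_\T(X_i)=\bigcup_{x\in X_i} L_{\S_2}(x)$.

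For the lower inclusion $\S_1\preceq\T$, take $\alpha(x)=X_i$ whenever $x\in X_i$ and lift an abstract action $u_k$ to the identical concrete action. Condition~(\ref{eq:over}) then holds by the very definition of $R_\T$: every $y\in f(x',u_k)+\delta_1\mathbb{B}$ with $x'\in X_i$ is contained in some successor cell of $X_i$ under $u_k$ in $\T$. The label condition $L_\T(X_i)\subset L_{\S_1}(x')$ for all $x'\in X_i$ is where I use $\eta\le\eps_2-\eps_1$: if $\pi\in L_{\S_2}(x)$ for some $x\in X_i$ then $\pi$ holds on $x+\eps_2\mathbb{B}$, and since $x'+\eps_1\mathbb{B}\subset x+(\eps_1+\eta)\mathbb{B}\subset x+\eps_2\mathbb{B}$, we obtain $\pi\in L_{\S_1}(x')$.

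For the upper inclusion $\T\preceq\S_2$, take $\beta(X_i)=X_i$ (viewing the cell as a subset of $X$) and $\beta(X^c)=X^c$. Given a concrete state $x\in X_i$ and any concrete action $u'\in U$, lift $u'$ to the nearest $u_k\in U_\mu$ with $|u'-u_k|\le\mu$. The main computation is the triangle-inequality estimate: for any $z\in X_j$ with $(X_i,u_k,X_j)\in R_\T$, pick $x^*\in X_i$ and $w\in\delta_1\mathbb{B}$ with $y:=f(x^*,u_k)+w\in X_j$; then
$$|z-f(x,u')|\le |z-y|+|w|+L|x^*-x|+L|u_k-u'|\le \eta+\delta_1+L\eta+L\mu\le \delta_2,$$
so $z\in f(x,u')+\delta_2\mathbb{B}$. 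Transitions to $X^c$ are handled by the same Lipschitz bound applied to the escaping point. The label inclusion $L_{\S_2}(x)\subset L_\T(X_i)$ is immediate from the choice of $L_\T(X_i)$.

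The main technical obstacle is the joint calibration of the grid sizes $\eta$ and $\mu$ against the Lipschitz constant $L$ and the uncertainty gap $\delta_2-\delta_1$; once the inequality $(1+L)\eta+L\mu\le\delta_2-\delta_1$ is secured (which is possible precisely because $\delta_1<\delta_2$), the two-sided simulation reduces to routine triangle-inequality bookkeeping. Some additional care is needed to ensure that $\S_1\preceq\T$ uses the \emph{same} lifted action for all concrete states in a single cell (rather than the weaker alternating-simulation condition discussed after Definition~\ref{def:ra}), but this is built into the definition of $R_\T$ through the existential quantifier over $x\in X_i$ appearing in its premise.
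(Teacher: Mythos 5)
Your construction is correct and proves the statement as written: the sandwich $\S_1\preceq\T\preceq\S_2$ goes through, your parameter budget $(1+L)\eta+L\mu\le\delta_2-\delta_1$ plays exactly the role of the paper's inequality $\delta_1+L(\eta+\mu)+\eta+\eps\le\delta_2$, your union-type labelling $L_\T(X_i)=\bigcup_{x\in X_i}L_{\S_2}(x)$ is an acceptable substitute for the paper's midpoint $\frac{\eps_1+\eps_2}{2}$-strengthening, and you correctly identified the one point where Definition~\ref{def:ra} bites (the same lifted action must serve every concrete state of a cell, which your existentially-quantified $R_\T$ handles). The genuine difference from the paper is in how $R_\T$ is obtained: you define transitions through the \emph{exact} reachable set, i.e.\ $(X_i,u_k,X_j)\in R_\T$ iff $(f(X_i,u_k)+\delta_1\mathbb{B})\cap X_j\neq\emptyset$, whereas the paper never touches the exact reach set. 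It instead proves a separate lemma (Lemma~\ref{lem:reach}, via convergent inclusion functions from interval analysis) producing a finitely computable over-approximation $\widehat{\text{Reach}}$ within $\eps$ of the true set, defines $\text{Post}_\T$ from that over-approximation, and pays for it with the extra $+\eps$ term in the parameter budget. Your route is more elementary and suffices for the theorem's literal claim of \emph{existence} of a finite $\T$; but membership in your $R_\T$ involves an existential quantifier over a continuum and is not effectively decidable for general nonlinear $f$, so your proof loses the algorithmic content that the paper's version carries (Algorithm~\ref{alg:abs}) and that the subsequent decidability corollary depends on. Two small loose ends you should still tidy up: $\T$ needs the self-loop $(X^c,a,X^c)$ so that $X^c$ is not a terminal state and condition (ii) holds for the pair $(X^c,X^c)$, and the label of $X^c$ must be fixed (e.g.\ $L_\T(X^c)=L_\S(X^c)$) so that $\textbf{in}\notin L_\T(X^c)$ and both label inclusions hold at that state.
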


To prove Theorem \ref{thm:complete}, we need the following lemma on over-approximation of the reachable set of a box in $\Real^n$ under a nonlinear map. 
\begin{lem}\label{lem:reach}
Fix any $\delta>0$, any box (also called an interval or a hyperrectangle) $[x]\subset \Real^n$, and any $u\in U$. For all $\eps>0$, there exists a finitely terminated algorithm to compute an over-approximation of the reachable set of $[x]$ under (\ref{eq:dtsw}), i.e., the set
$$
\text{Reach}_{(\ref{eq:dtsw})}([x],u)=f([x],u)+\delta \mathbb{B},
$$
 such that 
$$
\text{Reach}_{(\ref{eq:dtsw})}([x],u) \subset \widehat{\text{Reach}}_{(\ref{eq:dtsw})}([x],u)\subset \text{Reach}_{(\ref{eq:dtsw})}([x],u) +\eps\mathbb{B},
$$
where $\widehat{\text{Reach}}_{(\ref{eq:dtsw})}([x],u)$ is the computed over-approximation given as a union of boxes. 
\end{lem}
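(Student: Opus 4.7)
The plan is to construct $\widehat{\text{Reach}}_{(\ref{eq:dtsw})}([x],u)$ by a uniform subdivision of $[x]$, exploiting the local Lipschitz property of $f$ guaranteed by Assumption \ref{ass} to enclose the image of each sub-box by a small ball in the infinity norm, which is itself a box.

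First I would fix a Lipschitz constant $L$ for $f$ on the compact set $[x]\times\set{u}$ (or any compact superset thereof); its existence is immediate from Assumption \ref{ass}. Next, choose a resolution $\eta>0$ with $L\eta\le\eps$, and partition $[x]$ into finitely many axis-aligned sub-boxes $[x]_1,\ldots,[x]_N$ each of infinity-norm diameter at most $\eta$. For each $i$, let $x_c^i$ denote the centre of $[x]_i$ and define
$$
[r]_i := f(x_c^i,u) + (L\eta+\delta)\mathbb{B},
$$
which is again a box. The algorithm outputs
$$
\widehat{\text{Reach}}_{(\ref{eq:dtsw})}([x],u) := \bigcup_{i=1}^{N}[r]_i.
$$
Termination is immediate: $N$ is finite and only a single evaluation of $f$ per sub-box is required.

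The two required inclusions then reduce to routine set arithmetic. For $\text{Reach}_{(\ref{eq:dtsw})}([x],u)\subset\widehat{\text{Reach}}_{(\ref{eq:dtsw})}([x],u)$: any $y=f(z,u)+w$ with $z\in[x]$ and $w\in\delta\mathbb{B}$ lies in some $[x]_i$, and Lipschitz continuity gives $\abs{f(z,u)-f(x_c^i,u)}\le L\eta$, hence $\abs{y-f(x_c^i,u)}\le L\eta+\delta$, i.e. $y\in[r]_i$. For $\widehat{\text{Reach}}_{(\ref{eq:dtsw})}([x],u)\subset\text{Reach}_{(\ref{eq:dtsw})}([x],u)+\eps\mathbb{B}$: any $y\in[r]_i$ can be written as $f(x_c^i,u)+\delta v+L\eta w$ with $v,w\in\mathbb{B}$, and since $f(x_c^i,u)+\delta v\in\text{Reach}_{(\ref{eq:dtsw})}([x],u)$ we obtain $y\in\text{Reach}_{(\ref{eq:dtsw})}([x],u)+L\eta\mathbb{B}\subset\text{Reach}_{(\ref{eq:dtsw})}([x],u)+\eps\mathbb{B}$.

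The main obstacle I anticipate is not mathematical but one of modelling the computational primitives: the construction presupposes that values of $f$ at explicit points can be produced (or enclosed by boxes of arbitrarily small diameter) and that a valid Lipschitz constant $L$ is available. Under Assumption \ref{ass} both are in principle obtainable, and in a refined statement one may replace exact evaluation by interval enclosures that tighten as the subdivision is refined, absorbing any additional slack into $\eta$ without affecting the stated error bound.
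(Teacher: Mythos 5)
Your proof is correct and follows essentially the same route as the paper: subdivide $[x]$ finely enough relative to a Lipschitz constant $L$, enclose the image of each sub-box by a box, inflate by $\delta\mathbb{B}$, and take the finite union. The only cosmetic difference is that the paper invokes a generic \emph{convergent inclusion function} from interval analysis with width bound $w([f_u]([y]))\le L\,w([y])$, whereas you explicitly construct one (the centred Lipschitz enclosure $f(x_c^i,u)+L\eta\mathbb{B}$), which is just a concrete instantiation of the same device; your closing remark about replacing exact point evaluations of $f$ by shrinking interval enclosures is precisely what the paper's inclusion-function formalism handles.
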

\begin{proof}
This is a well-known result in interval analysis, known as outer approximation of the image set of a function. It can be proved, for example, using the results in  \cite[Chapter 3]{jaulin2001applied}. 
Here we include a proof for completeness. Let $\mathbb{IR}^n$ denote the set of all boxes in $\Real^n$. Let $[f_u]:\,\mathbb{IR}^n\rightarrow\mathbb{IR}^m$ be a \emph{convergent inclusion function} \cite{jaulin2001applied} of $f(\cdot,u)$, which satisfies the following two conditions: 
\begin{itemize}
\item $f([y],u)\subset [f_u]([y])$ for all $[y]\in \mathbb{IR}^n$; 
\item $\lim_{w([y])\to 0}w([f_u]([y]))=0$,
\end{itemize}
where $w([y])$ is the width of $[y]$, given by $\max_{1\leq i\leq n}\{\overline{y_i}-\underline{y_i}\}$ if we write  
$[y]=[y_1]\times\cdots\times[y_n]\subset \Real^n$ and $[y_i]=[\underline{y}_i,\overline{y}_i]\subset \Real$ for $i=1,\cdots,n$. Without loss of generality, assume that $\eps<1$. Because $f$ is $L$-Lipschitz continuous on $[x]$ for some $L>0$, we can find an inclusion function such that $w([f_u]([y]))\le Lw([y])$ for any subintervals of $[x]$.  
We mince the interval $[x]$ into subintervals such that the largest width of among these subintervals is smaller than $\frac{\eps}{2L}$. For each such interval $[y]$, we evaluate $[f_u]([y])$ and obtain the interval $[z]=[f_u]([y])+\delta\mathbb{B}$. Let $\mathcal{Y}$ denote the collection of all such intervals\footnote{Such a collection $\mathcal{Y}$ is called a non-regular paving of $\Real^n$, which can be regularized \cite[Chapter 3]{jaulin2001applied} to reduce the number of boxes and hence reduce complexity, but this is not necessary for our purpose.} and let $Y$ be its union. We claim that $$Y=\widehat{\text{Reach}}_{(\ref{eq:dtsw})}([x],u)$$ satisfies the requirement of this lemma. This is clearly true because, for each interval $[z]=[f_u]([y])+\delta\mathbb{B}$, we have $f([y])+\delta\mathbb{B}\subset [z]$ and the  distance from $[z]$ to the true reachable set $\text{Reach}_{(\ref{eq:dtsw})}([x],u)$ is bounded by $w([f_u]([y]))\le L\cdot w([y])\le \frac{\eps}{2}$. The proof for Lemma \ref{lem:reach} is also summarized in pseudo code format in Algorithm \ref{alg:reach}. \end{proof}

\begin{algorithm}[htbp]
  \caption{Computation of an over-approximation of $\text{Reach}_{(\ref{eq:dtsw})}([x],u)$ (Lemma \ref{lem:reach})}
  \label{alg:reach}
  \begin{algorithmic}[1]
  \Require $[x]$, $\delta$, $\eps>0$, the Lipschitz constant $L$ for $f(\cdot,u)$, and a centred convergent inclusion function $[f_u]$ for $f(\cdot,u)$
	\State $List\leftarrow [x]$
	\State $\mathcal{Y}\leftarrow \emptyset$
    \While{$List\neq \varnothing$}
    \State $[y]\leftarrow First(List)$
    \State $List \leftarrow List\setminus \set{[x]}$    
    \If{$w([y])\le \frac{\eps}{2L}$}
    \State $[z]\leftarrow [f_u]([y])+\delta\mathbb{B}$
    \State $\mathcal{Y}\leftarrow \mathcal{Y}\cup \set{[z]}$
    \Else
    \State $\{Left[y],Right[y] \}=Bisect([y])$
    \State $List\leftarrow List\cup\set{Left[y],Right[y]}$
    \EndIf
    \EndWhile
    \State $Y\leftarrow \cup_{[z]\in\mathcal{Y}}[z]$
    \State  \Return $Y=\widehat{\text{Reach}}_{(\ref{eq:dtsw})}([x],u)$
   \end{algorithmic}
\end{algorithm}

\begin{proof}[of Theorem \ref{thm:complete}]
The proof is constructive and we construct a finite transition system 
$$\T=(Q_\T, A_\T, R_\T, \Pi, L_\T)$$ 
as follows. 

For a positive integer $k$, let $\Z^k$ denote the $k$-dimensional integer lattice, i.e., the set of all $k$-tuples of integers. For parameters $\eta>0$ and $\mu>0$ (to be chosen later), define
$$
[\Real^n]_{\eta}=\eta\Z^n,\quad [\Real^m]_{\mu}=\mu\Z^m, 
$$
where $\mu\Z^k=\set{\mu z:\,z\in\Z^k}$ (for $k=n,m$). Define a relation $\alpha$ from $Q_\S$ to $[\Real^n]_{\eta}\cup\set{X^c}$ by 
$$\set{(x,q):\,q=\eta\floor{\frac{x}{\eta}},x\in X}\cup\set{(X^c,X^c)},$$ 
where $\floor{\cdot}$ is the floor function (i.e., $\floor{x}=(\floor{x_1},\cdots,\floor{x_n})$ and $\floor{x_i}$ gives the largest integer less than or equal to $x_i$). Let $Q_\T$ be $\alpha(Q_\S)$ and $A_\T=\set{a:\,\exists u\in A_{\S}\text{ s.t. }a=\mu\floor{\frac{u}{\mu}}}$ (which are both non-empty by definition and are finite because $X$ and $U$ are compact). Note that this gives a deterministic relation in the sense that $\alpha(x)$ is single-valued for all $x$. It is straightforward to verify that
\begin{equation}\label{eq:alpha}
\alpha^{-1}(\alpha(B))\subset B+\eta\mathbb{B},
\end{equation}
for any set $B\subset \Real^n \cup{X^c}$, with the slight abuse of notation that $X^c+x=X^c$ for any $x\in\Real^n$.

We next construct $R_\T$. For each $q\in Q_\T$ and $a\in A_\T$, denote by 
$$
\text{Reach}_{\S_1}(\alpha^{-1}(q),a)= \bigcup_{x\in\alpha^{-1}(q)}\text{Post}_{\S_2}(x,a). 
$$
We let $(q,a,q')$ be included in $R_\T$ if and only if 
$$
q'\in \alpha(\widehat{\text{Reach}}_{\S_1}(\overline{\alpha^{-1}(q)},a)),  
$$
i.e., 
\begin{equation}\label{eq:post}
\text{Post}_{\T}(q,a) = \alpha(\widehat{\text{Reach}}_{\S_1}(\overline{\alpha^{-1}(q)},a)), 
\end{equation}
where 
$
\widehat{\text{Reach}}_{\S_1}(\overline{\alpha^{-1}(q)},a)
$
is computed from Lemma \ref{lem:reach} by setting $[x]=\overline{\alpha^{-1}(q)}$, $u=a$, and $\delta=\delta_1$. In particular, we set 
$
\widehat{\text{Reach}}_{\S_1}(\overline{\alpha^{-1}(q)},a)= \widehat{\text{Reach}}_{(\ref{eq:dtsw})}([x],u), 
$
if $\widehat{\text{Reach}}_{(\ref{eq:dtsw})}([x],u)\subset X$, and 
$$
\widehat{\text{Reach}}_{\S_1}(\overline{\alpha^{-1}(q)},a)= \widehat{\text{Reach}}_{(\ref{eq:dtsw})}([x],u)\cup{\set{X^c}},
$$
if $\widehat{\text{Reach}}_{(\ref{eq:dtsw})}([x],u)\not\subset X$. 

Then it follows from Lemma \ref{lem:reach} that
\begin{align*}
\alpha(\bigcup_{x\in\alpha^{-1}(q)}\text{Post}_{\S_1}(x,a)) &\subset \alpha(\text{Reach}_{\S_1}(\overline{\alpha^{-1}(q)},a))\\
& \subset \alpha(\widehat{\text{Reach}}_{\S_1}(\overline{\alpha^{-1}(q)},a))\\
&= \text{Post}_{\T}(q,a), 
\end{align*}
which verifies condition (ii) of Definition \ref{def:ra} for $\S_1\preceq_{\alpha} \T$.  

Consider $\alpha^{-1}$ as a relation from $Q_\T$ to $Q_\S$. Then for each $x\in Q_\S$ and $u\in A_\S$, we can choose $a=\mu\floor{\frac{u}{\mu}}\in A_\T$ such that  
\begin{align*}
\alpha^{-1}(\bigcup_{q\in\alpha(x)}\text{Post}_{\T}(q,a)) &= \alpha^{-1}(\text{Post}_{\T}(q,a))\\
&\subset \alpha^{-1}(\alpha(\widehat{\text{Reach}}_{\S_1}(\overline{\alpha^{-1}(q)},a)))\\
& \subset \widehat{\text{Reach}}_{\S_1}(\overline{\alpha^{-1}(q)},a)+\eta\mathbb{B}\\
& \subset \text{Reach}_{\S_1}(\overline{\alpha^{-1}(q)},a) +(\eta+\eps)\mathbb{B}. 
\end{align*}
where we used (\ref{eq:post}), (\ref{eq:alpha}), and Lemma \ref{lem:reach}.  We claim that, if we can choose $\eta$, $\mu$, and $\eps$ sufficiently small such that 
\begin{equation}\label{eq:small}
\delta_1+L(\eta+\mu)+\eta+\eps\le \delta_2,
\end{equation}
then 
\begin{equation}\label{eq:goal}
\text{Reach}_{\S_1}(\overline{\alpha^{-1}(q)},a) +(\eta+\eps)\mathbb{B}\subset \text{Post}_{\S_2}(x,u).
\end{equation}
Note that $\overline{\alpha^{-1}(q)}\subset x+\eta \mathbb{B}$ and $a\in u+\mu\mathbb{B}$. We first assume that $X^c\not \in \text{Reach}_{\S_1}(\overline{\alpha^{-1}(q)},a)$. Without loss of generality, we can assume that $\eta\le 1$ and $\mu\le 1$. Because $f$ is Lipschitz continuous in both arguments on the compact set $(X+\mathbb{B})\times(U+\mathbb{B})$ (we use $L$ to indicate the uniform Lipschitz constant for both variables on this set), it follows that 
\begin{equation*} 
\text{Reach}_{\S_1}(\overline{\alpha^{-1}(q)},a)\subset f(x,u)+[\delta_1+L(\eta+\mu)]\mathbb{B}.
\end{equation*}
Combining the displayed equations above, we obtain 
\begin{align*} 
\alpha^{-1}(\bigcup_{q\in\alpha(x)}\text{Post}_{\T}(q,a))&\subset f(x,u)+\delta_2\mathbb{B}\\
&= \text{Post}_{\S_2}(x,u), 
\end{align*}
which verifies condition (ii) of Definition \ref{def:ra} for $\T\preceq_{\alpha}\S_2$, because $X^c\in \alpha^{-1}(\bigcup_{q\in\alpha(x)}\text{Post}_{\T}(q,a))$ would also imply $X^c\in \text{Post}_{\S_2}(x,u)$.   

Now we define $L_\T$. For each $q\in Q_\T$, define 
$$
\pi\in L_\T(q) 
$$
if and only if $\pi\in L_\S(x)$ for all $x\in q+\frac{\eps_1+\eps_2}{2}\mathbb{B}$.  Choose $\eta$ sufficiently small such that $\eta+\frac{\eps_1+\eps_2}{2}<\eps_2$. This is possible because $\eps_2>\eps_1$. To verify condition (iii) of Definition \ref{def:ra} for  $\S_1\preceq_{\alpha} \T$ and $\T\preceq_{\alpha^{-1}}\S_2$, we need to check that 
\begin{equation}\label{eq:label1}
L_{\S_2}(x) \subset L_\T(q) 
\end{equation}
and 
\begin{equation}\label{eq:label2}
L_\T(q) \subset L_{\S_1}(x) 
\end{equation}
for all $(x,q)\in\alpha$. Fix any $(x,q)\in \alpha$. If $\pi\in L_{\S_2}(x)$, then $\pi\in L_\S(y)$ for all $y\in x+\eps_2\mathbb{B}$. Since $q+\frac{\eps_1+\eps_2}{2}\mathbb{B}\subset x+[\eta+\frac{\eps_1+\eps_2}{2}]\mathbb{B}\subset x+\eps_2\mathbb{B}$, we have $\pi\in L_{\S}(y)$ for all $y\in q+\frac{\eps_1+\eps_2}{2}\mathbb{B}$ and $\pi\in L_\T(q)$. Hence, (\ref{eq:label1}) holds. If $\pi\in L_{\T}(q)$, then $\pi\in L_\S(y)$ for all $y\in q+\frac{\eps_1+\eps_2}{2}\mathbb{B}$ by the definition of $L_\T$. Since $x+\eps_1\mathbb{B}\subset q+(\eta+\eps_1)\mathbb{B}\subset q+\frac{\eps_1+\eps_2}{2}\mathbb{B}$, we have $\pi\in L_\S(y)$ for all $y\in x+\eps_1\mathbb{B}$ and $\pi\in L_{\S_1}(x)$. Hence, (\ref{eq:label2}) holds. 

We have verified $\S_1\preceq \T \preceq \S_2$ by checking all the conditions of Definition \ref{def:ra}. The main steps of the proof are also summarized in pseudo code format in Algorithm \ref{alg:abs}. 
\end{proof}

\begin{algorithm}
\caption{Computation of an approximately complete robust abstraction $\mathcal T$ for $\S$ (Theorem \ref{thm:complete})}
\label{alg:abs}
\begin{algorithmic}[1]
\Require $\S=(Q_\S, A_\S, R_\S, \Pi, L_\S)$, numbers $0\le\delta_1<\delta_2$ and $0\le\eps_1<\eps_2$
\State Set $L_{\S_i}$ be the $\eps_i$-strengthening of $L_\S$ ($i=1,2$)
\State Set $\Delta_{\delta_i}$ according to Definition \ref{uc} ($i=1,2$)
\State Set $\S_i = (Q_\S,A_\S,R_\S \cup \Delta_{\delta_i},\Pi,L_{\S_i})$  ($i=1,2$)
\State Choose rational numbers $\eta\in (0,1)$ and $\eps\in (0,1)$ such that $\delta_1+L(\eta+\mu)+\eta+\eps\le \delta_2$ and $\eta+\frac{\eps_1+\eps_2}{2}<\eps_2$, where $L$ is the uniform Lipschitz constant of $f$ on the compact set $(X+\mathbb{B})\times(U+\mathbb{B})$
\State Set $Q_\T=\set{x\in[\Real^n]_{\eta}:\,\exists x\in Q_{\S}\text{ s.t. }x=\eta\floor{\frac{u}{\eta}}}\cup\set{X^c}$ 
\State Set $A_\T=\set{a\in [\Real^m]_{\mu}:\,\exists u\in A_{\S}\text{ s.t. }a=\mu\floor{\frac{u}{\mu}}}$
\ForAll{$q\in Q_\T$}
\State $L_\T(q)\leftarrow\emptyset$
\ForAll{$\pi\in \Pi$}
\If{$\pi\in L_\S(x)$ for all $x\in q+\frac{\eps_1+\eps_2}{2}\mathbb{B}$}
\State  $L_\T(q)\leftarrow L_\T(q)\cup\set{\pi}$
\EndIf
\EndFor
\EndFor
\State $R_\T\leftarrow\emptyset$
\ForAll{$q \in Q_\T$}
\ForAll{$a \in A_\T$}
\If{$q'\in \alpha(\widehat{\text{Reach}}_{(\ref{eq:dtsw})}(\overline{\alpha^{-1}(q)},a))$}
\State $R_\T \leftarrow R_\T\cup \set{(q,a,q')}$
\EndIf
\EndFor
\EndFor
\State \Return ${\mathcal T}= (Q_\T, A_\T, R_\T, \Pi, L_\T)$ 
\end{algorithmic}
\end{algorithm}

\begin{rem}\em
While the disturbance sets are so chosen for simplicity of presentation, they do not have to be of the form $\delta\mathbb{B}$. In fact, if we choose two arbitrary sets $W_1$ and $W_2$ in place of $\delta_1\mathbb{B}$ and $\delta_2\mathbb{B}$ in Definition \ref{uc} such that there exists $\eps>0$ such that $W_1+\eps\mathbb{B}\subset W_2$, then a completeness result similar to Theorem \ref{thm:complete} can be stated. Furthermore, $\delta$ can be a vector in $\Real^n$ instead of a scalar, in which case $\delta_i\mathbb{B}$ becomes a hyperrectangle and the condition $0\le \delta_1<\delta_2$ is a componentwise inequality. 
\end{rem}

\begin{rem}\em
In the proof of Theorem \ref{thm:complete}, we in fact construct a single-valued abstraction relation $\alpha$. While the main results of the paper are presented for the case where $\alpha$ can be multi-valued, it appears, in view of the proof of Theorem \ref{thm:complete}, that for practice purposes, $\alpha$ may always be chosen to be deterministic, while still preserving robustness (see also Remark \ref{rem:measure}). 
\end{rem}

Finally, we would like to point out that Theorem \ref{thm:complete} shows that there exists an \emph{approximately complete} abstraction procedure for discrete-time nonlinear control systems of the form (\ref{eq:dts}) in the sense that, if a specification $\phi$ is realizable for $\S_2$ (namely, a $\delta_2$-perturbation of $\S$), then there is a robust abstraction $\T$ of $\S_1$, which is a $\delta_1$-perturbation of $\S$, such that $\phi$ is realizable for $\T$ and hence it is also realizable for $\S_1$. Note that $\S_1$ and $\S_2$ can be made arbitrarily close by choosing $\delta_2$ close to $\delta_1$ and $\eps_2$ close to $\eps_1$. Since the proof of above theorem is constructive, we can algorithmically synthesize a control strategy for $\S_1$ by computing $\T$ first and then solving a discrete synthesis problem for $\T$ with the specification $\phi$. 
We summarize this in the following corollary.

\begin{coro}
Let $\S_1$, $\S_2$, and $\phi$ be as defined in Theorem \ref{thm:complete}. There is a decision procedure to answer one of the following two questions: 
\begin{itemize}
\item[(i)] there exists a control strategy $\kappa$ (and one can algorithmically construct it) such that $(\S_1,\kappa)\vDash\phi$; 
\item[(ii)] $\phi$ is not realizable for $\S_2$. 
\end{itemize}
\end{coro}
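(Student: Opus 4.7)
The proof is essentially a direct combination of Theorem \ref{thm:complete} and Theorem \ref{thm:sound}, together with the classical decidability of LTL synthesis on finite transition systems. The plan is to reduce the question about the continuous-state systems $\S_1,\S_2$ to a synthesis problem on the finite sandwich abstraction $\T$ produced by Algorithm \ref{alg:abs}, and then push the answer back outward in both directions via soundness.

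First I would run Algorithm \ref{alg:abs} to obtain a finite transition system $\T$ together with two abstraction relations $\alpha,\beta$ witnessing $\S_1\preceq_\alpha \T \preceq_\beta \S_2$, as guaranteed by Theorem \ref{thm:complete}. Next, since $\T$ is finite and $\phi$ is an LTL formula over $\Pi$, the synthesis problem of deciding whether there exists a control strategy $\mu$ with $(\T,\mu)\vDash\phi$ — and of constructing such a $\mu$ when it exists — is decidable by standard automata-theoretic methods (translate $\phi$ to a deterministic parity or Rabin automaton, form its synchronous product with $\T$, and solve the resulting finite two-player game).

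Now branch on the outcome. If synthesis on $\T$ succeeds and returns $\mu$, then by Theorem \ref{thm:sound} applied to $\S_1\preceq_\alpha \T$ I can algorithmically extract an $\alpha$-implementation $\kappa$ of $\mu$ satisfying $(\S_1,\kappa)\vDash\phi$; this gives case (i). If synthesis on $\T$ fails, suppose toward contradiction that some $\nu$ realizes $\phi$ on $\S_2$; then Theorem \ref{thm:sound} applied to $\T\preceq_\beta \S_2$ would yield a $\beta$-implementation on $\T$ realizing $\phi$, contradicting the failure of synthesis. Hence $\phi$ is not realizable on $\S_2$, which is case (ii). The two cases are exhaustive because finite-state LTL synthesis is a decision procedure, and they are consistent because Theorem \ref{thm:sound} rules out the pathological situation of a controller existing on $\S_2$ but not on $\T$.

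The only content beyond Theorems \ref{thm:sound} and \ref{thm:complete} is the invocation of decidability of finite-state LTL synthesis; the rest is bookkeeping on the sandwich $\S_1\preceq \T \preceq \S_2$, whose two halves supply exactly the two implications needed. I do not anticipate a substantive obstacle: the heavy lifting — the effective construction of $\T$ and the verification of the two abstraction conditions under the Lipschitz/compactness Assumption \ref{ass} — has already been carried out in the proof of Theorem \ref{thm:complete}.
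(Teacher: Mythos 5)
Your proposal is correct and is essentially the paper's own argument: the paper justifies this corollary (in the paragraph preceding it) by exactly the same sandwich reasoning, namely computing the finite $\T$ with $\S_1\preceq\T\preceq\S_2$ from Theorem \ref{thm:complete}, invoking decidability of discrete LTL synthesis on $\T$, and using the soundness result (Theorem \ref{thm:sound}) on each half of the sandwich to conclude case (i) when synthesis on $\T$ succeeds and case (ii) when it fails. Your explicit contrapositive step for case (ii) is the only part the paper leaves implicit, and it is handled correctly.
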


\section{An example}\label{sec:ex}

We use a simple motion planning example to illustrate our results. Consider a vehicle steering problem, where the dynamics of the vehicle are given by the so-called bicycle model \cite{astrom2010feedback}. The same example is used for illustration of abstraction-based control design in  \cite{reissig2016feedback,zamani2012symbolic,rungger2016scots}. The model is given by 
  \begin{equation*}
        \begin{bmatrix}
      \dot{x}_1\\
      \dot{x}_2\\
      \dot{x}_3
    \end{bmatrix} = 
    \begin{bmatrix}
      u_1 \cos (\alpha+x_3)/\cos (\alpha)\\
      u_1 \sin (\alpha+x_3)/\cos (\alpha)\\
      u_1 \tan (u_2)
    \end{bmatrix}
  \end{equation*}
where $(x_1,x_2,x_3)=(x,y,\theta)$ and $(u_1,u_2)=(v,\phi)$. The constant $b=1$ is the wheel base and $a=0.5$ is the distance between centre of mass and rear wheels. The states consist of the coordinates of the centre of the mass $(x,y)$ and the heading angle $\theta$. The controls consist of the wheel speed $v$ and the steering angle $\phi$. The variable $\alpha$ is the angle of velocity depending on $\phi$. 

Let $X=[7,10]\times [0,4.5]\times [-\pi,\pi]$ and $U=[-1,1]\times [-1,1]$. Consider a workspace and a specification given by
    \begin{equation*}
      \varphi=A_I\wedge \square( \neg A_O) \wedge \lozenge A_G,
    \end{equation*}
    where
    \begin{equation*}
      \begin{aligned}
        A_I&=[7.6,0.4,\pi/2]^T,\\
        A_G&=[9,9.6]\times[0,0.6]\times[-\pi,\pi],\\
        A_O&=A_{O1}\cup A_{O2} \cup A_{O3},\\
        A_{O1}&=[8.2,8.4]\times[0,3.6]\times[-\pi,\pi],\\
        A_{O2}&=[8.4,9.4]\times[3.4,3.6]\times[-\pi,\pi],\\
        A_{O3}&=[9.4,10]\times[2.4,2.6]\times[-\pi,\pi]. 
      \end{aligned}
    \end{equation*}

To design a control strategy to realize this specification, we discretize the model using a sampling time step $\tau=0.3$. We first consider the case with no disturbance, i.e., $\delta=0$. Using the discretization parameters $\eta=0.2$ and $\mu= 0.3$, the resulting nominal abstraction consists of 12,880 states and 3,023,040 transitions. The computation time was $7.3$s for computing the abstraction and $8.6$s for solving the synthesis problem on a 2.2GHz Intel Core i7 processor. A feasible trajectory is shown in Figure \ref{fig:sim1}. To design a robust control strategy, we consider an additive disturbance of size $\delta=0.05$ on the right-hand side of the system. We compute a robust abstraction by setting $\delta_1=0.05$ and $\eta=0.05$. The resulting robust abstraction consists of {$782,691$} states and $1,727,548,752$ transitions. The computation time was {$2,327$s} for abstraction  and {$2,289$s} for synthesis on the same processor. A feasible trajectory is shown in Figure \ref{fig:sim2}. Using the same controller, a simulated trajectory with an additive disturbance of size $\delta=0.15$ is shown to violate the specification. Furthermore, Theorem \ref{thm:complete} implies that, for any $0.05\le\delta_1<\delta_2$, by further refining the abstraction, we should be able to assert that either the specification is robustly realizable with a disturbance of size $\delta_1$ or the specification is not realizable with a disturbance of size $\delta_2$. 

    \begin{figure}[ht]
      \centering
           \includegraphics[scale=0.53]{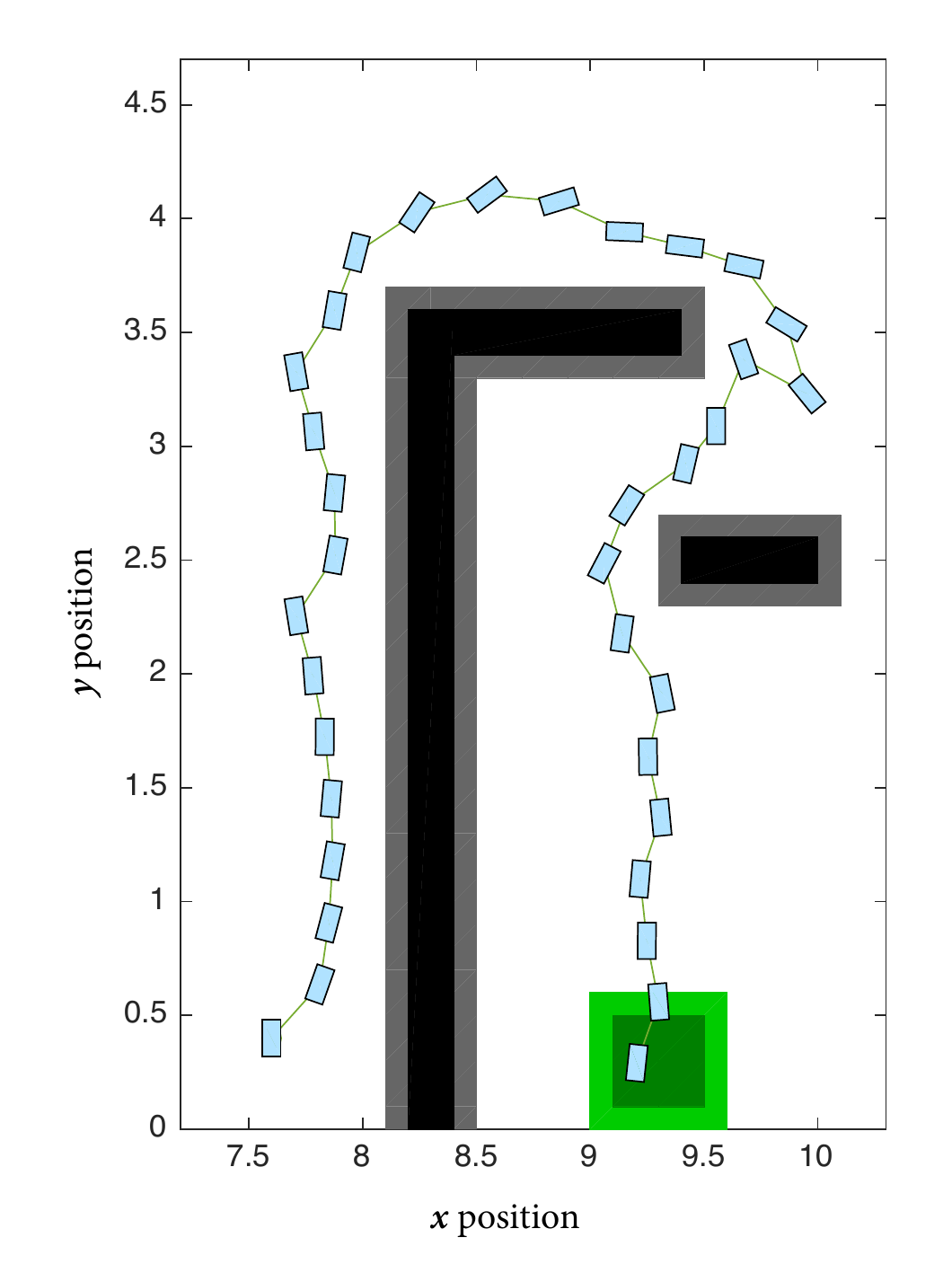}
      \label{fig:sim1}
     \caption{A simulated trajectory from a nominal abstraction that satisfies the specification.}
    \end{figure}

    \begin{figure}[ht]
      \centering
      \includegraphics[scale=0.4]{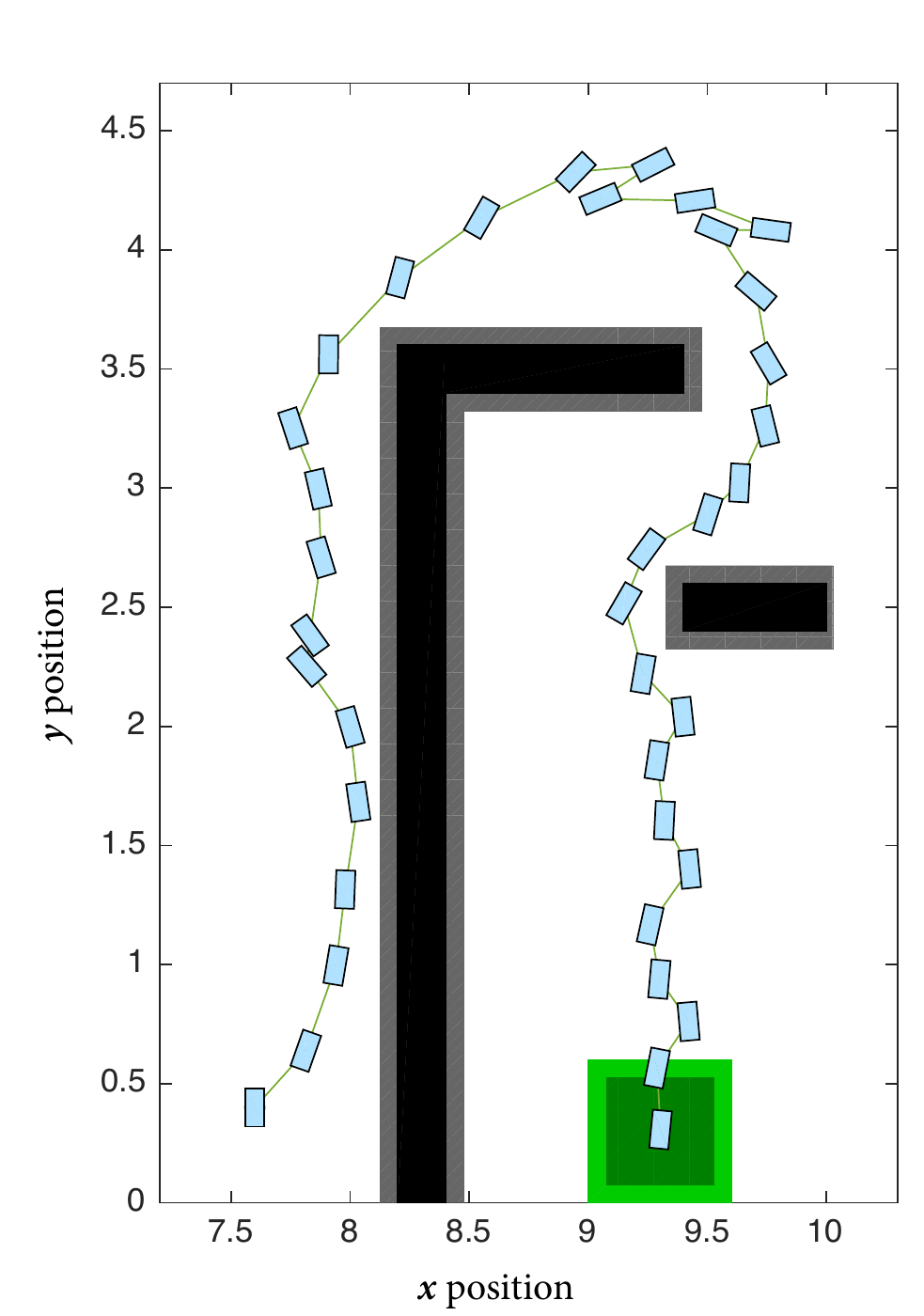}\includegraphics[scale=0.4]{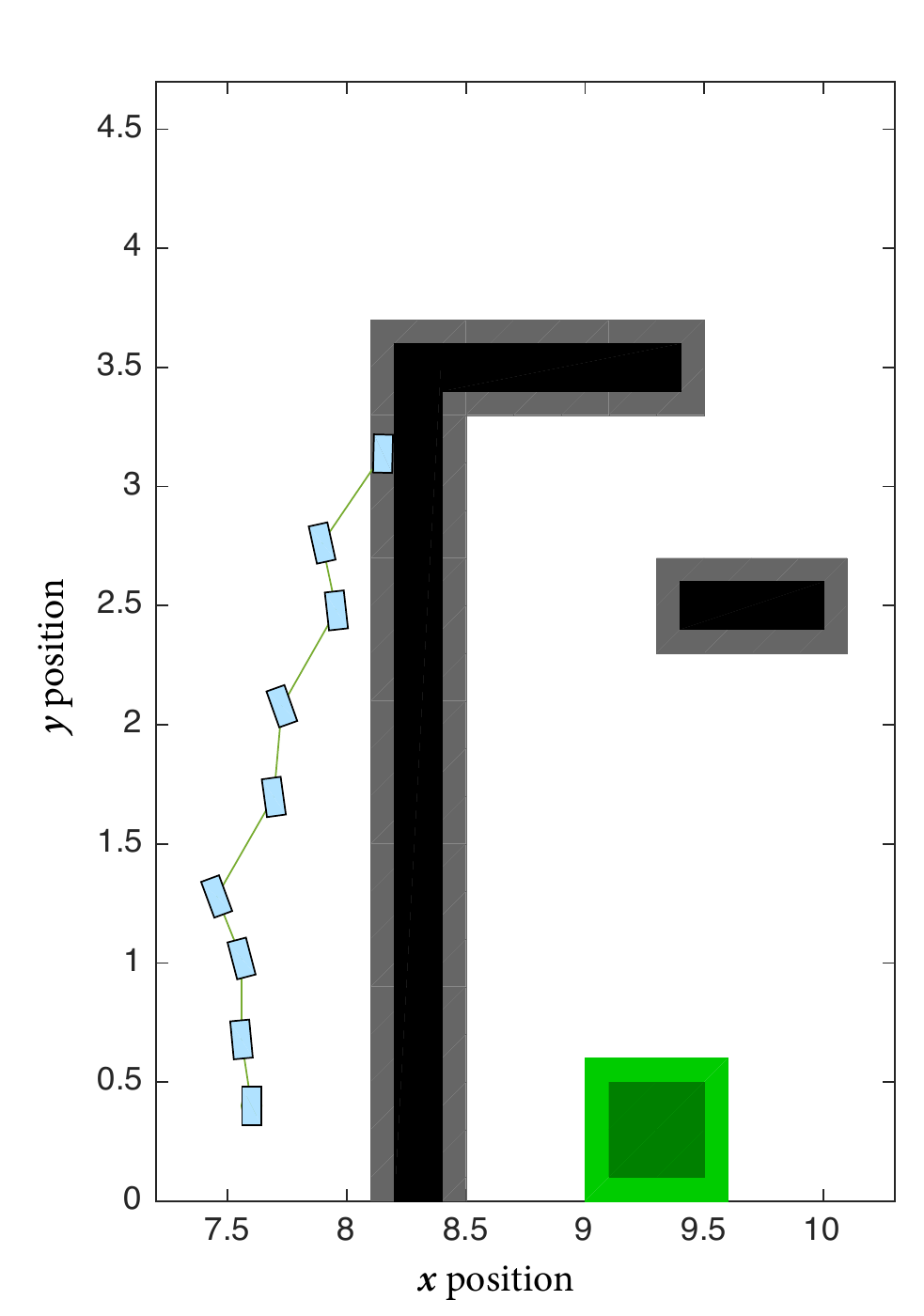}
            \label{fig:sim2}
\caption{A valid trajectory (left) obtained from a robust abstraction with $\delta=0.05$ and a failed trajectory (right) with disturbance size $\delta=0.15$.}
    \end{figure}

\section{Conclusions and Discussions}\label{sec:conc}

We proposed a computational framework for designing robust abstractions for control synthesis. It is shown that robust abstractions are not only sound in the sense that they preserve robust satisfaction of linear-time properties, but also approximately complete in the sense that, given a concrete discrete-time control system and an arbitrarily small perturbation of this system, there exists a finite transition system that robustly abstracts the concrete system and is abstracted by the perturbed system at the same time. Consequently, the existence of controllers for a general discrete-time nonlinear control system and linear-time specifications is robustly decidable: if a specification is robustly realizable, there is a decision procedure to find a (potentially less) robust control strategy. 

It is interesting to note that the connection between robustness and decidability appeared in different contexts. Recently, the notion of $\delta$-decidability for satisfiability over the reals \cite{gao2012delta} and $\delta$-reachability analysis \cite{kong2015dreach} have been proposed to turn otherwise undecidable problems into decidable ones. A notion of ``robustness implies decidability" was proposed in early work in \cite{fraenzle2001what} for verifying bounded properties for polynomial hybrid automaton and in \cite{asarin2001perturbed} for reachability analysis of several simple models of hybrid systems. Finally, the early work in \cite{anderson1975output} showed  that \emph{robust} stability is decidable for linear systems in the context of output feedback stabilization. In this sense, the current work can serve as an example of ``robustness implies decidability" in the context of linear-time logic control synthesis for nonlinear systems.

\section{Acknowledgments}

This research was supported in part by NSERC Canada and the University of Waterloo. The author would like to thank Necmiye Ozay and Yinan Li for stimulating discussions on related topics and the anonymous reviewers for helpful comments and suggestions.

\bibliographystyle{abbrv}
\bibliography{robust_abs} 

\begin{thebibliography}{10}

\bibitem{alur2000discrete}
R.~Alur, T.~A. Henzinger, G.~Lafferriere, and G.~J. Pappas.
\newblock Discrete abstractions of hybrid systems.
\newblock {\em Proceedings of the IEEE}, 88(7):971--984, 2000.

\bibitem{anderson1975output}
B.~Anderson, N.~Bose, and E.~Jury.
\newblock Output feedback stabilization and related problems-solution via
  decision methods.
\newblock {\em IEEE Transactions on Automatic control}, 20(1):53--66, 1975.

\bibitem{angeli2002lyapunov}
D.~Angeli et~al.
\newblock A lyapunov approach to incremental stability properties.
\newblock {\em IEEE Transactions on Automatic Control}, 47(3):410--421, 2002.

\bibitem{asarin2001perturbed}
E.~Asarin and A.~Bouajjani.
\newblock Perturbed turing machines and hybrid systems.
\newblock In {\em Proc. of LICS}, pages 269--278. IEEE, 2001.

\bibitem{astrom2010feedback}
K.~J. Astr{\"o}m and R.~M. Murray.
\newblock {\em Feedback Systems: An Introduction for Scientists and Engineers}.
\newblock Princeton University Press, 2010.

\bibitem{baier2008principles}
C.~Baier and J.-P. Katoen.
\newblock {\em Principles of Model Checking}.
\newblock MIT Press, 2008.

\bibitem{fraenzle2001what}
M.~Fr{\"a}nzle.
\newblock What will be eventually true of polynomial hybrid automata?
\newblock In {\em Proc. of TACS}, pages 340--359. Springer, 2001.

\bibitem{gao2012delta}
S.~Gao, J.~Avigad, and E.~M. Clarke.
\newblock $\delta$-complete decision procedures for satisfiability over the
  reals.
\newblock In {\em Proc. of IJCAR}, pages 286--300. Springer, 2012.

\bibitem{girard2012controller}
A.~Girard.
\newblock Controller synthesis for safety and reachability via approximate
  bisimulation.
\newblock {\em Automatica}, 48(5):947--953, 2012.

\bibitem{girard2010approximately}
A.~Girard, G.~Pola, and P.~Tabuada.
\newblock Approximately bisimilar symbolic models for incrementally stable
  switched systems.
\newblock {\em IEEE Trans. on Automatic Control}, 55:116--126, 2010.

\bibitem{jaulin2001applied}
L.~Jaulin.
\newblock {\em Applied Interval Analysis}.
\newblock Springer Science \& Business Media, 2001.

\bibitem{kong2015dreach}
S.~Kong, S.~Gao, W.~Chen, and E.~Clarke.
\newblock dreach: $\delta$-reachability analysis for hybrid systems.
\newblock In {\em Proc. of TACAS}, pages 200--205. Springer, 2015.

\bibitem{liu2014abstraction}
J.~Liu and N.~Ozay.
\newblock Abstraction, discretization, and robustness in temporal logic control
  of dynamical systems.
\newblock In {\em Proc. of HSCC}, pages 293--302, 2014.

\bibitem{liu2016finite}
J.~Liu and N.~Ozay.
\newblock Finite abstractions with robustness margins for temporal logic-based
  control synthesis.
\newblock {\em Nonlinear Analysis: Hybrid Systems}, 22:1--15, 2016.

\bibitem{pnueli1977temporal}
A.~Pnueli.
\newblock The temporal logic of programs.
\newblock In {\em Proc. of FOCS}, pages 46--57. IEEE, 1977.

\bibitem{pola2008approximately}
G.~Pola, A.~Girard, and P.~Tabuada.
\newblock Approximately bisimilar symbolic models for nonlinear control
  systems.
\newblock {\em Automatica}, 44(10):2508--2516, 2008.

\bibitem{reiszig2009computing}
G.~Reissig.
\newblock Computing abstractions of nonlinear systems.
\newblock {\em IEEE Trans. Automatic Control}, 56:2583--2598, 2011.

\bibitem{reissig2014feedback}
G.~Reissig and M.~Rungger.
\newblock Feedback refinement relations for symbolic controller synthesis.
\newblock In {\em Proc. of CDC}, pages 88--94. IEEE, 2014.

\bibitem{reissig2016feedback}
G.~Reissig, A.~Weber, and M.~Rungger.
\newblock {Feedback Refinement Relations for the Synthesis of Symbolic
  Controllers}.
\newblock {\em IEEE Transactions on Automatic Control, to appear}, 2016.

\bibitem{rungger2016scots}
M.~Rungger and M.~Zamani.
\newblock Scots: A tool for the synthesis of symbolic controllers.
\newblock In {\em Proc. of HSCC}, pages 99--104. ACM, 2016.

\bibitem{tabuada2009verification}
P.~Tabuada.
\newblock {\em Verification and Control of Hybrid Systems: A Symbolic
  Approach}.
\newblock Springer, 2009.

\bibitem{tabuada2006linear}
P.~Tabuada and G.~J. Pappas.
\newblock Linear time logic control of discrete-time linear systems.
\newblock {\em IEEE Trans. on Automatic Control}, 51(12):1862--1877, 2006.

\bibitem{tazaki2012discrete}
Y.~Tazaki and J.~Imura.
\newblock Discrete abstractions of nonlinear systems based on error propagation
  analysis.
\newblock {\em IEEE Trans. Automatic Control}, 57:550--564, 2012.

\bibitem{topcu2012synthesizing}
U.~Topcu, N.~Ozay, J.~Liu, and R.~M. Murray.
\newblock On synthesizing robust discrete controllers under modeling
  uncertainty.
\newblock In {\em Proc. of HSCC}, pages 85--94. ACM, 2012.

\bibitem{zamani2012symbolic}
M.~Zamani, G.~Pola, M.~Mazo, and P.~Tabuada.
\newblock Symbolic models for nonlinear control systems without stability
  assumptions.
\newblock {\em IEEE Transactions on Automatic Control}, 57(7):1804--1809, 2012.

\end{thebibliography}
\end{document}